\documentclass{article}

\usepackage{fullpage}
\usepackage[T1]{fontenc}
\usepackage{lmodern}
\usepackage{graphicx}
\usepackage{algorithm2e}
\usepackage{amssymb}
\usepackage{amsthm}
\usepackage{amsmath}
\usepackage{wrapfig}
\usepackage{lipsum}
\usepackage{footnote}
\usepackage{hyperref}
\makesavenoteenv{tabular}
\makesavenoteenv{table}

\newtheorem{proposition}{Proposition}[section]
\newtheorem{theorem}[proposition]{Theorem}
\newtheorem{lemma}[proposition]{Lemma}
\newtheorem{corollary}[proposition]{Corollary}

\newtheorem{definition}[proposition]{Definition}
\newcommand{\E}{\mathrm{E}}

\newcommand{\OPT}{{\sf OPT}}

\def\lc{\left\lceil}   
\def\rc{\right\rceil}

\usepackage[utf8]{inputenc}
\usepackage[T1]{fontenc}
\usepackage{hyperref}
\usepackage{url}
\usepackage{booktabs}
\usepackage{amsfonts}
\usepackage{nicefrac}
\usepackage{microtype}

\title{Differentially Private Clustering via Maximum Coverage}

\author{
	Matthew Jones \qquad   Huy L\^{e} Nguy\~{\^{e}}n \qquad Thy Nguyen \\
	Khoury College\\
	Northeastern University\\
	Boston, MA 02115 \\
	\texttt{\{jones.m,hu.nguyen,nguyen.thy2\}@northeastern.edu} \\
}

\begin{document}

\RestyleAlgo{boxruled}
\LinesNumbered

\maketitle

\begin{abstract}
    This paper studies the problem of clustering in metric spaces while preserving the privacy of individual data. Specifically, we examine differentially private variants of the k-medians and Euclidean k-means problems. We  present polynomial algorithms with constant multiplicative error and lower additive error than the previous state-of-the-art for each problem. Additionally, our algorithms use a clustering algorithm without differential privacy as a black-box. This allows  practitioners to control the trade-off between runtime and approximation factor by choosing a suitable clustering algorithm to use.
\end{abstract}

\section{Introduction}

In this work, we study the problem of clustering while preserving the privacy of individuals in the dataset.  Clustering is an important routine in many machine learning tasks, such as image segmentation \cite{yang2017towards, patel2013latent}, collaborative filtering \cite{mcsherry2009differentially, schafer2007collaborative}, and time series analysis \cite{mueen2012clustering}. Thus, improving the performance of private clustering has a great potential for improving other private machine learning tasks. We consider the clustering problem with differential privacy, which is a privacy framework that requires the algorithm to be insensitive to small changes in the dataset \cite{dwork2014algorithmic}. Formally, suppose we are given a metric $d$, a set $V$ of $n$ points in the metric, and a (private) set of demand points $D \subseteq V$.  Our objective is to choose a set $F \subset V$ of size $k$ to minimize the following objective: 
\begin{equation}
    \sum_{v \in V} \min_{f \in F} d(v, f)^p
\end{equation}
This is the $k$-medians problem when $p=1$ and the $k$-means problem when $p=2$. 

The $k$-medians problem has been studied extensively in the literature. Previous work in \cite{kariv1979algorithmic} proved that $k$-medians is NP-hard. There is a long line of works on approximation algorithms for $k$-medians without differential privacy \cite{chrobak2006reverse,arya2004local,charikar2002constant, jain2001approximation, jain2003greedy,BPRST17}. The state of the art is a $2.675+\epsilon$ approximation by~\cite{BPRST17}. The private $k$-medians problem was studied in \cite{gupta2010differentially}, which shows that any $(\epsilon_p,0)$-differentially private algorithm for the $k$-median problem must have cost at least $\OPT + \Omega (\Delta \cdot k \ln (n \slash k) \slash \epsilon_p)$, where $\OPT$ is the optimal cost. The paper also provided a polynomial time  $(\epsilon_p,0)$-differentially private algorithm that solves the problem with with costs at most $6\OPT  + O(k^2 \log^2 n\slash\epsilon_p)$. \cite{stemmer2018differentially} states a variant of the algorithm for $(\epsilon_p, \delta_p)$-differential privacy with cost
\begin{equation*}
  O \Big(  \OPT + \frac{\Delta k^{1.5} }{\epsilon_p} \log\frac{n}{\beta} \sqrt{\log n \cdot \log(1\slash \delta_p)}\Big)
\end{equation*} 
where $\beta$ is the failure probability.

\textbf{Our contribution} is a new polynomial time $(\epsilon_p, \delta_p)$-differentially private $k$-medians algorithm with constant multiplicative factor and improved additive error:
\begin{equation*}
     O\left(\OPT + \frac{k\Delta}{\epsilon_p}  \log n \log\left(\frac{e}{\delta_p}\right)\right)
\end{equation*}

Note that our additive error is linear in $k$ instead of $k^{1.5}$ as in the previous work by \cite{gupta2010differentially,stemmer2018differentially}, and almost matches the lower bound of \cite{gupta2010differentially} up to lower order terms (their original lower bound is for $(\epsilon_p,0)$-privacy but it can be extended to $(\epsilon_p,\delta_p)$ privacy with inverse polynomial $\delta_p$). By using the non-private algorithm of \cite{BPRST17} as a subroutine, our multiplicative factor is $6.35+\epsilon$, which is slightly worse than 6 from~\cite{gupta2010differentially}.

The $k$-means problem has also been studied extensively with a long line of works on privacy preserving approximation algorithms~\cite{BDMN05,NRS07,FFKN09,feldman2017coresets,balcan2017differentially,stemmer2018differentially}.
As an extension of our techniques, we also provide an $(\epsilon_p, \delta_p)$-differentially private algorithm for the Euclidean $k$-means problem with constant multiplicative error and better additive cost compared to previous work (see table \ref{tab:compare}). Our algorithm has cost:
\begin{equation*}
O\big(\OPT  +  k\log n +d^{0.51} (\log\log n)^{2.53} \left(k\log k \right)^{1.01} \big)
\end{equation*}

Again in this setting, our additive error is almost linear in $k$ as opposed to $k^{1.5}$ in previous work~\cite{stemmer2018differentially}.

In addition to improved performance guarantee, our $k$-medians and Euclidean $k$-means algorithms use a non-private clustering algorithm as a black-box. This allows practitioner to control the trade-off between approximation guarantee and runtime of the clustering algorithm and even use heuristics with good empirical performance. 

\textbf{Our techniques} for $k$-medians and Euclidean $k$-means include two main steps. In the first step, we iterate through distance thresholds from small to large and apply a differentially private Maximum Coverage algorithm to select centers that cover almost as many points as the optimal solution at those thresholds. For the second step, we create a new dataset based on the potential centers and apply a non-private clustering algorithm on this dataset. The new dataset is created by moving each demand points to the nearest potential center, and then applying the Laplace mechanism \cite{laplacenoise} to report the number of points at each potential center. This makes sure that privacy is preserved for the new dataset and no additional privacy cost when we apply the non-private clustering algorithm in the final step.

\section{Related Works}

\begin{table}[]
\begin{tabular}{l|c|c|c}
\textbf{Reference}           & \textbf{Objective}  & \textbf{Multiplicative Error} & \textbf{Additive Error}\\ \hline
Gupta et al. \cite{gupta2010differentially},&    $k$-medians\footnote[1]{The bound in \cite{gupta2010differentially} is for  $\epsilon$-differential privacy, while \cite{stemmer2018differentially} is for $\delta$-approximate $\epsilon$-differential privacy.}              & O(1) & $O(k\log n)^2$        \\
 Stemmer et al.  \cite{stemmer2018differentially} &&&  $O(k \log n)^{1.5}$ \\ \hline
\textbf{Ours} & $k$-medians  & $O(1)$ & $O(k\log n)$ \\ \hline  
Feldman et al. \cite{feldman2017coresets}  & Euclidean $k$-means & $O(k \log n)$ & $O\left(k\sqrt{d} \log (nd) \cdot 9 ^{\log^* \left(|X| \sqrt{d}\right)}\right) $ \footnote[2]{ \cite{feldman2017coresets} studies the discrete $d$-dimensional space $X^d$.}\\ \hline
Balcan et al. \cite{balcan2017differentially}  & Euclidean $k$-means & $O\left(\log^3n\right)$ & $O\left(\left(k^2+d\right)\log^5 n \right)$ \\ \hline
Stemmer et al. \cite{stemmer2018differentially} &  Euclidean $k$-means & $O(1)$ & $ O\big((k \log \left(n\log k\right))^{1.5}$  \\ &&& $+d^{0.51} (\log\log n)^{2.53} \left(k\log k \right)^{1.01} \big)$
\\ \hline 
 \textbf{Ours}& Euclidean $k$-means& $O(1)$ & $O\big(k\log n $ \\ 
 &&& $+d^{0.51} (\log\log n)^{2.53} \left(k\log k \right)^{1.01} \big)$\\
\end{tabular}
\caption{\label{tab:compare}
Comparison of our clustering algorithms with prior works, omitting dependence on $\epsilon_p, \delta_p$.}
\end{table}
Table \ref{tab:compare} summarizes the performance of previous work in comparison with our algorithms. In the table, only the work by \cite{gupta2010differentially} is for $(\epsilon_p,0)$-differential privacy, while the others are for $(\epsilon_p, \delta_p)$-differential privacy.  As mentioned above, \cite{gupta2010differentially} gave the first private $k$-medians algorithm with a constant multiplicative approximation and additive error polynomial in the number of centers and logarithmic in the number of points. The algorithm uses the local search approach of \cite{arya2004local}. Our algorithm, on the other hand, can be used with any non-private $k$-medians algorithm. 

For the Euclidean $k$-means problem, \cite{balcan2017differentially} proposes the strategy of first identifying a set of potential centers with low $k$-means cost, then applying the techniques of \cite{gupta2010differentially} to find the final centers among the potential centers. However, their  potential centers are only guaranteed to contain a solution with multiplicative approximation $O\left(\log^3 n\right)$. This result was improved by \cite{stemmer2018differentially}, which can construct a set of potential centers containing a solution with constant multiplicative approximation.

Another approach for the $k$-means problem is via the $1$-cluster problem. Given a set of input points in $\mathbb{R}^d$ and $t\leq n$, the goal is to find a center that covers at least $t$ points with the smallest radius. The work of \cite{feldman2017coresets} shows that the $k$-means problem can solved by running the algorithm for the $1$-cluster problem multiple time to find several balls to cover most of data points with $O(k\log n)$ multiplicative error. \cite{nissim2018clustering} proposed an improved algorithm for the $1$-cluster problem, resulting in a differentially private $k$-means algorithm with $O(k)$ multiplicative error.

\section{Preliminaries}

\subsection{Differential privacy}
Differential privacy is a privacy definition for computations run
against sensitive input data sets. Its requirement, informally, is that the computation behaves similarly on two input dataset that are nearly identical. Formally, 

\begin{definition} (\cite{dwork2006our})
A randomized algorithm $M$ has $\delta_p$-approximate $\epsilon_p$-differential privacy, or $(\epsilon_p,\delta_p)$-differential privacy, if for any two input sets $A$ and $B$ with a symmetric difference which has a single element, and for any set of outcomes $S \subseteq Range(M)$,
    \begin{align*}
\Pr[M(A) \in S]  \le  \exp(\epsilon_p) \times \Pr[M(B) \in S] + \delta_p\; .
  \end{align*}
\end{definition}

If $\delta_p = 0$, we say that $M$ is $\epsilon_p$-differentially private.
An algorithm with $(\epsilon_p,0)$-differntial privacy ensures that the output $M(A)$ is (almost) equally likely to be observed on neighboring datasets, whereas in $(\epsilon_p,\delta_p)$-differential privacy, $\delta_p$ dictates the probability that $\epsilon_p$-privacy fails to hold \cite{dwork2006our}. In this way, $(\epsilon_p,\delta_p)$-differential privacy is a relaxation of $\epsilon_p$-differential privacy. We use an error parameter $\epsilon$ for utility and we use $\epsilon_p$ and $\delta_p$ to denote parameters for differential privacy (or $\epsilon_s$ and $\delta_s$ for algorithm \ref{alg:setcover}, to differentiate privacy parameters to different algorithms).

One of the most basic constructions for differentially private algorithms is the Laplace mechanism. 

\begin{definition} ($L_1$ sensitivity)
A function $f: \mathbb{N}^{|X|} \rightarrow \mathbb{R}^k  $ has $L_1$ sensitivty $\Delta f$ if 
$ \| f(A) - f(A') \|_1 \leq \Delta f$ for all $A,A'$ with a symmetric difference which has a single element.
\end{definition}

\begin{theorem} (Laplace mechanism  \cite{laplacenoise})
\label{thm:laplacemech}
Let function $f: \mathbb{N}^{|X|} \rightarrow \mathbb{R}^k  $ have $L_1$ sensitivty $\Delta f$ and $\epsilon_p > 0$. Mechanism $M$ that on input $A$ outputs  $f(A) + Lap(\frac{\Delta f}{\epsilon_p})$ is $(\epsilon_p,0)$-differentially private, where $Lap(\frac{\Delta f}{\epsilon_p})$ denotes a random variable following Laplace distribution with scale parameter $b = \frac{\Delta f}{\epsilon_p}$.
\end{theorem}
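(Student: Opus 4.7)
The plan is to show that for any two neighboring datasets $A, A'$ (with symmetric difference of a single element) the output distribution of $M$ satisfies a pointwise density ratio bound of $\exp(\epsilon_p)$, and then integrate over any measurable output set to recover the $(\epsilon_p,0)$-differential privacy definition.

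First, I would write down the density of $M(A)$. Since $M(A) = f(A) + Y$ with $Y = (Y_1, \ldots, Y_k)$ consisting of independent $Lap(b)$ components for $b = \Delta f / \epsilon_p$, the density at $z \in \mathbb{R}^k$ is $p_A(z) = \prod_{i=1}^k \frac{1}{2b}\exp(-|z_i - f(A)_i|/b)$. For neighboring $A, A'$, the normalization constants cancel in the pointwise ratio, leaving $p_A(z)/p_{A'}(z) = \exp\bigl(\tfrac{1}{b}\sum_{i=1}^k(|z_i - f(A')_i| - |z_i - f(A)_i|)\bigr)$.

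The key algebraic step is to apply the reverse triangle inequality coordinatewise: $|z_i - f(A')_i| - |z_i - f(A)_i| \leq |f(A)_i - f(A')_i|$. Summing over $i$ and invoking the $L_1$ sensitivity definition gives $\sum_i |f(A)_i - f(A')_i| = \|f(A) - f(A')\|_1 \leq \Delta f$, so the exponent is at most $\Delta f / b = \epsilon_p$. Hence $p_A(z) \leq \exp(\epsilon_p)\, p_{A'}(z)$ for every $z$. Integrating this bound over an arbitrary measurable $S \subseteq \mathbb{R}^k$ yields $\Pr[M(A) \in S] \leq \exp(\epsilon_p)\Pr[M(A') \in S]$, with no additive slack, so we obtain $(\epsilon_p, 0)$-differential privacy.

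There is no substantive obstacle in this argument; it is the canonical Laplace mechanism proof. The only points that require care are (i) being explicit that we are bounding the \emph{ratio} of densities at each point $z$ rather than some difference, and (ii) applying the reverse triangle inequality in the correct direction so that the $L_1$ sensitivity bound combines the coordinates into a single $\Delta f$ in the exponent, which is precisely why the scale parameter is chosen to be $\Delta f / \epsilon_p$.
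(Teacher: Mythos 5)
Your proof is correct and is the canonical argument for the Laplace mechanism: the pointwise density-ratio bound via the reverse triangle inequality and the $L_1$ sensitivity, followed by integration over the output set. The paper itself does not prove this theorem --- it is stated as a known result cited from the literature --- and your argument matches the standard proof given in that cited source, so there is nothing further to reconcile.
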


Another tool for construction of differential private algorithm we use in this work is the exponential mechanism. This construction is parameterized by a query function $q(A,r)$ mapping a pair of input data set $A$ and candidate result $r$ to a real valued. With $q$ and a privacy value $\epsilon_p$, the mechanism selects an output in favor of high score value:
  \begin{align}
    \label{eqn:expmech}
Pr[{\mathcal E}_q^\epsilon(A) = r]  \propto & \exp(\epsilon_p q(A,r)) 
  \end{align}
\begin{theorem}
\label{theorem:expLoss} (\cite{mcsherry2007mechanism})
  The exponential mechanism, when used to select an output $r \in R$ gives $2\epsilon_p\Delta$-differential privacy, letting $R_\OPT$ be the subset of $R$ achieving $q(A,r) = \max_r q(A,r)$, ensures that
\begin{align*}
    \Pr[q(A, \mathcal{E}_q^\epsilon(A)) < \max_r q(A,r) -
    \ln(|R|/|R_\OPT|)/\epsilon_p - t/\epsilon_p] \leq \exp(-t)
 \end{align*}

\end{theorem}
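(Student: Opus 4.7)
The plan has two independent pieces: a privacy analysis and a utility analysis, both flowing directly from the definition $\Pr[\mathcal{E}_q^{\epsilon}(A)=r] \propto \exp(\epsilon_p q(A,r))$.

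For privacy, I would fix neighboring datasets $A, A'$ and an arbitrary outcome $r \in R$, and bound the ratio $\Pr[\mathcal{E}_q^{\epsilon}(A)=r]/\Pr[\mathcal{E}_q^{\epsilon}(A')=r]$. Writing $Z(A) = \sum_{s \in R} \exp(\epsilon_p q(A,s))$, this ratio factors as $\exp(\epsilon_p(q(A,r)-q(A',r))) \cdot Z(A')/Z(A)$. The sensitivity assumption $|q(A,s)-q(A',s)| \le \Delta$ for every $s$ controls the first factor by $\exp(\epsilon_p \Delta)$; applied termwise inside the sum it also gives $Z(A') \le \exp(\epsilon_p \Delta) Z(A)$, so the second factor contributes another $\exp(\epsilon_p \Delta)$. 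Multiplying gives the claimed $2\epsilon_p \Delta$-differential privacy bound (the $\delta_p=0$ case of the definition).

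For utility, let $\OPT = \max_r q(A,r)$ and, for a slack $s>0$ to be chosen, let $B = \{r \in R : q(A,r) < \OPT - s\}$ be the set of ``bad'' outcomes. I would upper-bound $\Pr[\mathcal{E}_q^{\epsilon}(A) \in B]$ by summing the unnormalized weights $\exp(\epsilon_p q(A,r))$ over $r \in B$ (each at most $\exp(\epsilon_p(\OPT-s))$, with at most $|R|$ terms) and lower-bounding $Z(A)$ by keeping only the $|R_{\OPT}|$ terms that achieve $\OPT$, each equal to $\exp(\epsilon_p \OPT)$. This yields
\begin{equation*}
\Pr[\mathcal{E}_q^{\epsilon}(A) \in B] \;\le\; \frac{|R|\exp(\epsilon_p(\OPT-s))}{|R_{\OPT}|\exp(\epsilon_p \OPT)} \;=\; \frac{|R|}{|R_{\OPT}|}\exp(-\epsilon_p s).
\end{equation*}
Setting $s = \ln(|R|/|R_{\OPT}|)/\epsilon_p + t/\epsilon_p$ makes the right-hand side equal to $\exp(-t)$, which is exactly the claimed tail bound.

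Neither step involves a real obstacle, since both arguments are essentially one-line manipulations once the mechanism's density is written down; the only thing requiring care is the factor of $2$ in the privacy guarantee, which arises because both the numerator term and the normalizing constant shift under a neighboring-dataset change, and both shifts must be accounted for. I would make sure to state the sensitivity hypothesis on $q(\cdot, r)$ uniformly in $r$, since the termwise bound on $Z$ relies on it.
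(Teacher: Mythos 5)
Your proposal is correct, and it is essentially the standard McSherry--Talwar argument for this result, which the paper simply imports from \cite{mcsherry2007mechanism} without reproving: the privacy part bounds the pointwise output ratio by $\exp(2\epsilon_p\Delta)$ (one factor of $\exp(\epsilon_p\Delta)$ from the numerator term, one from the normalizing constant, using sensitivity of $q$ uniformly in $r$), and the utility part bounds the probability of the bad set by $(|R|/|R_\OPT|)\exp(-\epsilon_p s)$ after lower-bounding the normalizer by the $|R_\OPT|$ optimal outcomes. No gaps; the choice of slack $s=\ln(|R|/|R_\OPT|)/\epsilon_p+t/\epsilon_p$ gives exactly the stated tail bound.
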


\subsection{Maximum Coverage}

 \begin{wrapfigure}[12]{R}{0.5\textwidth}

\begin{algorithm}[H]
\caption{Maximum Coverage}
\label{alg:setcover}
\DontPrintSemicolon
\LinesNotNumbered
\SetAlgoLined
\KwIn{Set system $(U,\mathcal{S})$, a private set $R \subset U$ to cover, $\epsilon_s, \delta_s$, $m$ }

 $i \leftarrow 1$, $R_i = R$, $\mathcal{S}_i
    \leftarrow \mathcal{S}$. $\epsilon' \leftarrow \epsilon_s/2\ln(\frac{e}{\delta_s})$.

    \For{$i=1,2,\ldots,m$} {
  Pick a set $S$ from $\mathcal{S}_i$ with probability
    proportional to $\exp(\epsilon' |S \cap R_i|)$.
    
Output set $S$.
 
 $R_{i+1} \leftarrow R_i \setminus S$, $\mathcal{S}_{i+1} \leftarrow \mathcal{S}_i - \{S\}$.
}
\end{algorithm}

  \end{wrapfigure}

Our differentially private $k$-medians algorithm solves the Maximum Coverage problem as a subproblem.   The Maximum Coverage problem is defined as follows: on a universe $\mathcal{U}$ of items and a family $\mathcal{S}$ of subsets of $\mathcal{U}$, and a parameter $z$, the goal is to select $z$ sets in $\mathcal{S}$ to cover the most elements of $\mathcal{U}$. Formally, we are looking to find
\[\arg\max_{\mathcal{C}\subseteq \mathcal{S}, |\mathcal{C}| = z} \Bigl|\bigcup_{c\in\mathcal{C}}c\Bigr|.\]

Our approach for solving private Maximum Coverage is based on the Unweighted Set Cover algorithm in \cite{gupta2010differentially}. To preserve privacy, algorithm \ref{alg:setcover} chooses sets using the exponential mechanism, with probability related to the improvement in coverage caused by choosing the set.

Assume that there exists a selection of $z$ sets that covers $\mathcal{U}$.  A classic fact for the maximum coverage problem is that if we build the family $\mathcal{C}$ by always selecting the item in $\mathcal{S}\setminus\mathcal{C}$ that covers the largest number of uncovered elements in $\mathcal{U}$, then after $z$ iterations, $|\cup_{c\in\mathcal{C}}c| \geq (1- 1/e) |\mathcal{U}|$. Here we show an observation that will be useful for our algorithm later. The proof is in the appendix.
\begin{lemma}
\label{lem:2eps}
For $\epsilon > 0$, if we always select a set that covers at least half as many uncovered elements as the set that covers the most uncovered elements, then after $2z\ln 1/\epsilon$ iteration, \(\Bigl|\bigcup_{c\in\mathcal{C}}c\Bigr| \geq (1- \epsilon) | \mathcal{U}|\).
\end{lemma}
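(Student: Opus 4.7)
The plan is a straightforward modification of the standard greedy analysis for Maximum Coverage. Let $u_i$ denote the number of uncovered elements remaining before iteration $i$, so $u_1 = |\mathcal{U}|$, and let me write $c_i$ for the number of newly covered elements at iteration $i$, giving the recurrence $u_{i+1} = u_i - c_i$.

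The first step is to lower-bound the greedy optimum at each round. By hypothesis, there exist $z$ sets whose union is all of $\mathcal{U}$; in particular these $z$ sets cover all $u_i$ currently uncovered elements, so by an averaging/pigeonhole argument at least one of them covers $\ge u_i/z$ uncovered elements. Therefore the set in $\mathcal{S}_i$ covering the most uncovered elements covers at least $u_i/z$ of them. By the assumption of the lemma, the set we actually pick covers at least half that, so $c_i \ge u_i/(2z)$, which yields
\begin{equation*}
u_{i+1} \;\le\; u_i\left(1 - \tfrac{1}{2z}\right).
\end{equation*}

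Iterating this inequality from $i=1$ gives $u_{t+1} \le |\mathcal{U}|\left(1 - \tfrac{1}{2z}\right)^{t} \le |\mathcal{U}|\,e^{-t/(2z)}$ using $1-x \le e^{-x}$. Setting $t = 2z\ln(1/\epsilon)$ makes the right-hand side at most $\epsilon|\mathcal{U}|$, so after $2z\ln(1/\epsilon)$ iterations $|\bigcup_{c\in\mathcal{C}} c| = |\mathcal{U}| - u_{t+1} \ge (1-\epsilon)|\mathcal{U}|$, as required.

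There is no real obstacle here — the only thing to be careful about is making sure the pigeonhole step still applies after several iterations, i.e.\ that the optimal $z$ sets remain in $\mathcal{S}_i$ (or, if some have been removed, they have already contributed all their elements and so the bound $u_i/z$ on the best remaining set still goes through). Either way, the worst case is captured by the recurrence above, and the rest is the standard $(1-1/k)^k \le 1/e$ calculation.
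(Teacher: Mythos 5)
Your proof is correct and follows essentially the same greedy-analysis route as the paper: establish $c_i \ge u_i/(2z)$ via pigeonhole plus the half-of-best assumption, iterate to $u_t \le |\mathcal{U}|(1-1/2z)^t$, and conclude with $(1-x)\le e^{-x}$. Your closing remark about the pigeonhole step surviving removal of already-chosen sets is a valid point that the paper's version glosses over slightly, but both arguments are substantively identical.
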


\section{Private k-medians}
\label{sec:privatekmedians}
Given a set of points $V$, a metric $d: V \times V \rightarrow \mathbb{R}$, a private set of demand points $D  \subseteq V$, and a value $k < |V|$, the objective of the $k$-medians problem is to select a set of points (centers) $F \subset V$, $|F| = k$ to minimize $\text{cost}(F) = \sum_{v \in D} d(v,F)$, where $d(v,F) = \min_{f \in F} d(v,f)$. Let $\Delta= \max_{u,v \in V} d(u,v)$ be the diameter of the metric space. We use $\epsilon$ as the approximation parameter for maximum coverage problem in lemma \ref{lem:2eps}, and $\epsilon_p$ and $\delta_p$ as privacy parameters. Let $B_t(v)$ be the ball of radius $t$ centered at $v$ i.e. the set of all points in the metric space within distance $t$ from $v$.

Our approach is based on the Maximum Coverage problem. One way to compute the clustering cost is by computing for every distance threshold $t$, the number of points within distance $t$ from the centers and integrating the counts from $0$ to the maximum distance. Thus, if for every threshold $t$ the number of points farther than $t$ from the our solution's centers is not much more than the number of points farther than $t$ from the optimal centers, then our cost is not much larger than the optimal cost. Thus, our algorithm goes through distance thresholds from small to large and tries to ``cover'' as many points as possible using fresh centers every time. For each threshold, we aim to cover $1-\epsilon$ times the number of points the optimal solution can cover. The result is that our clustering cost is not much larger than the optimal cost, albeit using more centers. Since we use exponentially growing thresholds, we only use $O(\log n)$ times more centers, which results in a small error due to privacy noise. The full algorithm is described in Algorithm~\ref{alg:kmedian}.

\begin{algorithm}[]
\caption{The $k$-medians algorithm}
\label{alg:kmedian}
\DontPrintSemicolon
\SetAlgoLined
\KwIn{ a set of points $V$, a private set of demand points $D \subseteq V$, a metric $d$, $\epsilon$, $\epsilon_{p}$, $\delta_p$}

$V'=D, C = \emptyset, r = \lc 1+\log_{1+\epsilon}n \rc$

\For{$i$ from $1$ to $r$}{
Set $t_i = (1+\epsilon)^{i-1}\Delta/n$

Run algorithm \ref{alg:setcover} for the set system $\mathcal{U}=V, \mathcal{S}= \{B_{t_i}(v) \cap V': v \in V\}$, with $\epsilon_{s} = \frac{\epsilon_p}{2}, \delta_s = \delta_p$ for  $m = 2k\ln(1 \slash \epsilon) $  iterations to get $C_i$

$V_i = \bigcup_{v\in C_i} B_{t_i}(v) \cap V'$

$V' = V' \setminus   V_i $

$C = C \cup C_i$
}

Assign each point in $D$ to its closest point $c \in C$ 

Let $n_c$ be the total number of points assigned to $c$ for each $c\in C$

For each $c\in C$, set $n_c' = n_c + Y_c$ where $Y_c \sim \text{Lap}\left(\frac{2}{\epsilon_p} \right)$ 

Run a k-medians algorithm to select $k$ centers from $V$, with demand points at each $c\in C$ with multiplicity $n_c'$
\end{algorithm}

The algorithm begins with the discretization of distance thresholds.  The goal is similar to our discussion above. We apply algorithm \ref{alg:setcover} to select a set of points that cover a large set of demand points across different distance thresholds. Note that the objective cost of any set of center following the discretization scheme is not too far from the actual costs, as we will show in lemma \ref{lem:approx}. Thus, the set of centers that we find across different thresholds $t$ should also have cost similar to $\OPT$. 

Our final step is to obtain the final set of centers from the potential centers. To preserve privacy for this step, we create a new dataset similar to the original one with some privacy. In this new dataset, every demand point is shifted towards the closest center from the previous step, and we apply the Laplace mechanism to the assigned number of demands points of each center to preserve privacy. At this point, we can use a $k$-medians algorithm on this dataset to output the final $k$ centers. In this new problem, although the objective changes because of shifting and the Laplace mechanism, the point that can be selected to be centers are the same as before. Thus, the cost of the centers returned in the final step is at most the the cost of this new objective plus the total shifting distance.  In the following sections, we will first analyze the privacy and then the utility of this algorithm.

\subsection{Privacy Analysis}
We first show that this algorithm is $(\epsilon_p, \delta_p)$ differentially private.
To show this, we first show that the entirety of the for loop is $(\epsilon_p / 2, \delta_p)$ differentially private, and then take advantage of composition and apply lemma \ref{thm:laplacemech} at line 12 to obtain the final result. Note that the analysis of the loop very closely follows the proof for privacy of Unweighted Set Cover in \cite{gupta2010differentially}; their algorithm selects sets in a particular order to form a cover, while our algorithm selects candidate centers with increasing distance thresholds, where a center is assumed to cover all demand points within its distance threshold. The significant difference in the two proofs is that our algorithm could select the same center twice with different distance thresholds while a set will never be chosen twice in set cover in \cite{gupta2010differentially}. This re-selection of the same center will not affect the differential privacy of the algorithm, because the privacy analysis hinges on which demand points have been covered, not which centers have been selected. As a result of this, we save an additional $\log n$ factor on privacy, which removes a $\log n$ term from the additive error in lemma \ref{lem:coverage} which carries through the additive error in the utility. We include the proof in the appendix and omit it here, due to its close similarity to \cite{gupta2010differentially}.

\begin{lemma} \label{lem:kmediansloopprivacy}
The for loop in algorithm \ref{alg:kmedian} preserves $(\epsilon_p / 2, \delta_p)$ differential privacy.
\end{lemma}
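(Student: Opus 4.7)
The plan is to view the entire for loop of Algorithm~\ref{alg:kmedian} as a single concatenated execution of $rm = O(k \log n \cdot \log(1/\epsilon))$ exponential mechanism selections that operate on a shared, monotonically shrinking pool $V'$ of remaining demand points, with each selection using the parameter $\epsilon' = \epsilon_s/(2\ln(e/\delta_s)) = \epsilon_p/(4\ln(e/\delta_p))$ from Algorithm~\ref{alg:setcover}. This reduction is legitimate because the transition between two outer iterations (from threshold $t_i$ to $t_{i+1}$) only (a) removes covered demand points from $V'$ and (b) swaps in a new collection of sets $\{B_{t_{i+1}}(v) \cap V' : v \in V\}$, which is a deterministic function of the public set $V$, the threshold $t_{i+1}$, and the current $V'$. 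No fresh privacy-sensitive information is introduced at the boundary, so the entire loop behaves like a single run of Algorithm~\ref{alg:setcover} with a time-varying set system.

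Next, I would fix two neighboring demand sets $D$ and $D' = D \cup \{p\}$, couple their executions, and bound the ratio of probabilities of any fixed output sequence $(S_1, \ldots, S_{rm})$. At each step the exponential mechanism samples proportional to $\exp(\epsilon'|S \cap R|)$, where $R$ is the current uncovered portion of the demand set, so the additional point $p$ affects the score of $S$ only when $p \in R \cap S$. The crucial structural property is that once an output set containing $p$ is selected, $p$ is removed from $R$ and every subsequent selection has identical conditional distribution under $D$ and $D'$; the privacy accounting therefore depends only on $p$'s ``active lifetime,'' not on the total number $rm$ of selections. This is precisely why re-selecting the same geometric center across multiple thresholds (which does occur here but never in the set-cover setting of \cite{gupta2010differentially}) costs nothing extra: the analysis is driven entirely by the uncovered set $R$, not by the identity of previously chosen centers.

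From here the argument follows the template of Gupta et al.\ for Unweighted Set Cover essentially verbatim. One decomposes the log-likelihood ratio along a reference trajectory, bounds the per-step contribution by $O(\epsilon')$ using sensitivity~$1$ of $|S \cap R|$, and shows that with probability at least $1 - \delta_s = 1 - \delta_p$ the point $p$ is covered within a number of iterations for which the accumulated loss is at most $\epsilon_s = \epsilon_p/2$; the remaining $\delta_p$-mass of ``bad'' trajectories is absorbed into the additive privacy slack. Combining these pieces yields the desired $(\epsilon_p/2, \delta_p)$-differential privacy for the whole loop, and in particular avoids the extra $\log n$ factor that naive composition over the $r$ outer iterations would incur.

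The main obstacle will be the coupling itself: the sequences of selected sets under $D$ and $D'$ can diverge well before $p$ is covered, and the ``covering time'' of $p$ is a random variable whose distribution differs between the two executions. Handling this cleanly requires either a likelihood-ratio walk along a single trajectory with a carefully chosen potential function (as in \cite{gupta2010differentially}), or an event-based argument conditioning on the first selection that covers $p$ and showing that the ratio before that event telescopes to something bounded by $e^{\epsilon_p/2}$ with probability $1-\delta_p$. All other steps---sensitivity of the score function, the guarantee of Theorem~\ref{theorem:expLoss} for each exponential mechanism call, and the closure of differential privacy under post-processing of the monotone updates to $V'$---are routine.
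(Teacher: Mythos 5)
Your proposal takes essentially the same route as the paper: view the loop as one long concatenated run of the exponential mechanism, fix a trajectory, compare the probability of that trajectory under $D$ and $D'$, and exploit the fact that the score function depends only on the current uncovered pool $R$ (so a center being re-selected at a later threshold costs nothing extra, and the ratio telescopes once the distinguished point $p$ is covered). The key structural observation about ``active lifetime'' and the treatment of re-selection are exactly the paper's insight.

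One imprecision in your middle paragraph is worth flagging because, taken literally, it does not close the argument. You write that one ``bounds the per-step contribution by $O(\epsilon')$'' and then argues that with probability $1-\delta_p$ the point $p$ is covered ``within a number of iterations for which the accumulated loss is at most $\epsilon_p/2$.'' That would require $p$ to be covered within $O(\ln(e/\delta_p))$ steps with high probability, which is false in general: $p$ may remain uncovered until the very last threshold. The correct accounting is that the per-step multiplicative factor in the direction where $D'$ contains $p$ is $1 + (\exp(\epsilon')-1)p_i \le \exp(2\epsilon' p_i)$, where $p_i$ is the conditional probability that a set containing $p$ is chosen at step $i$; the quantity to control is thus $\sum_i p_i \mathbf{1}(p \text{ uncovered at step } i)$, not the iteration count. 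The paper calls a trajectory $q$-bad when this sum exceeds $q$ and invokes a concentration bound (Lemma 6.4 of \cite{gupta2010differentially}, stated here as Lemma~\ref{lem:deltafix}) to show the probability of a $(\ln\delta_p^{-1})$-bad trajectory is at most $\delta_p$; on good trajectories the ratio is at most $\exp(2\epsilon'(\ln\delta_p^{-1}+1)) \le \exp(\epsilon_p/2)$. Also note the other direction, where $D$ contains $p$ and $D'$ does not, is a strictly easier pure-DP case with ratio at most $\exp(\epsilon')$; the two directions need to be handled separately. You acknowledge this machinery in your obstacle paragraph, but the way the middle step is phrased would lead a naive attempt astray.
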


The function affected by the Laplace mechanism in line 11 returns a vector of the counts $n_c$. In the case of sets $A, A'$ as in theorem \ref{thm:laplacemech}, the difference between $f(A)$ and $f(A')$ is exactly one for one item in this vector, and therefore $\|f(A) - f(A')\|$ = 1, so the function has $L_1$ sensitivity 1. Thus, line 11  is $(\epsilon_p / 2, 0)$ differentially private by theorem \ref{thm:laplacemech}. By composition, this fact and lemma \ref{lem:kmediansloopprivacy} yield the following lemma:

\begin{lemma}
\label{lem:kmedianprivacy}
Algorithm \ref{alg:kmedian} is $(\epsilon_p, \delta_p)$ differentially private.
\end{lemma}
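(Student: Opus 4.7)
The plan is to decompose Algorithm~\ref{alg:kmedian} into three stages and combine them via adaptive composition and post-processing. Stage one is the for loop on lines 2--8, which is already $(\epsilon_p/2, \delta_p)$-differentially private by Lemma~\ref{lem:kmediansloopprivacy}. Stage two is the Laplace noise addition at line~11, which I will argue is $(\epsilon_p/2, 0)$-differentially private. Stage three is the non-private $k$-medians invocation at line~12, which I will treat as post-processing.

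The only real computation is the sensitivity bound for stage two. For two neighboring demand sets $D$ and $D'$ differing in a single point $x$, and for \emph{any} fixed collection of candidate centers $C$ produced by stage one, the nearest-center assignment agrees on every point of $D \cap D'$. Hence the integer count vector $(n_c)_{c \in C}$ can differ in at most one coordinate (the one corresponding to the nearest center of $x$), and there it differs by exactly $1$. Thus the map $D \mapsto (n_c)_{c \in C}$ has $L_1$ sensitivity $1$, and adding independent $\mathrm{Lap}(2/\epsilon_p)$ noise to each coordinate yields $(\epsilon_p/2, 0)$-differential privacy by Theorem~\ref{thm:laplacemech}.

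It remains to combine the stages. By adaptive basic composition, the concatenation of stage one and stage two is $(\epsilon_p, \delta_p)$-differentially private: the sensitivity argument above holds pointwise in $C$, so stage two is $(\epsilon_p/2, 0)$-DP conditional on every possible output of stage one. Stage three depends on the raw input $D$ only through the already-released pair $(C, (n_c')_{c\in C})$, so it is post-processing of a differentially private release and preserves the $(\epsilon_p, \delta_p)$-DP guarantee.

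The main (mild) subtlety to flag is that stage two's sensitivity must hold uniformly over the data-dependent choice of $C$; this is what allows adaptive composition to apply rather than merely sequential composition of independent mechanisms. But since the $L_1$-difference analysis does not use any property of $C$ beyond the definition of nearest-center assignment, this uniformity is automatic, and the proof reduces to invoking Lemma~\ref{lem:kmediansloopprivacy}, Theorem~\ref{thm:laplacemech}, and composition in sequence.
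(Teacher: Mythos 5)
Your proposal is correct and follows essentially the same decomposition the paper uses: Lemma~\ref{lem:kmediansloopprivacy} for the for loop, sensitivity-$1$ plus Theorem~\ref{thm:laplacemech} for the Laplace step, and composition for the combination. Your version is a bit more explicit than the paper's one-paragraph argument in two minor respects---flagging the final $k$-medians call as post-processing, and noting that the sensitivity bound must hold uniformly over the data-dependent choice of $C$ so that adaptive composition applies---but these are refinements of the same argument rather than a different route.
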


\subsection{Utility Analysis}

We define $t_0 = 0, t_1 = \frac{\Delta}{n},  t_2 = \frac{\Delta(1+\epsilon)}{n},..., t_r = \Delta$ as shorthand for the thresholds. Also, let $o_i$ be the number of points at distance in the range $[t_{i-1}, t_i)$ from their center in the optimal solution (which we denote $\OPT$), and let $a_i$ be the the number of points at distance in the range $[t_{i-1}, t_i)$ from their closest point in $C$ after the for-loop in algorithm \ref{alg:kmedian}. To bound the performance of our solution, we first show that discretizing the distance thresholds at $t_i$'s instead of integrating from $0$ to $\Delta$ introduces negligible error to the cost of the solution (see Lemma~\ref{lem:approx}). Next, for each distance threshold, lemma \ref{lem:coverage} uses the approximation guarantee of maximum coverage to show that we are efficiently covering demand points using not many more centers than $\OPT$. Crucially, lemma \ref{lem:solutionapprox} shows that by covering almost as well as $\OPT$ at every distance threshold, our solution has cost not much more than that of $\OPT$.

The following lemmas bound $\OPT$ by the threshold approximation and then bound our solution by that approximation respectively.

\begin{lemma}
\label{lem:approx}
$\sum_{i=1} o_i t_i \leq (1+\epsilon) \OPT + \Delta$
\end{lemma}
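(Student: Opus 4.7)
The plan is to split the sum at the first threshold $t_1=\Delta/n$ and bound each piece separately. The key observation is that the thresholds grow geometrically by factor $(1+\epsilon)$ from $t_1$ onward, so for any point whose optimal-distance falls into bucket $i\geq 2$, i.e., has $d_v \in [t_{i-1},t_i)$, we get the convenient inequality $t_i = (1+\epsilon) t_{i-1} \leq (1+\epsilon) d_v$. This is exactly the right form to bound $o_i t_i$ against the true contribution of bucket $i$ to $\OPT$.

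Concretely, I would first argue the base bucket contributes at most $\Delta$. Since $o_1 \leq n$ and $t_1 = \Delta/n$, we have $o_1 t_1 \leq \Delta$. This handles the only bucket for which $d_v < t_{i-1}$ is allowed (because $t_0 = 0$), and absorbs the discretization error at the bottom of the scale.

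Next, for every $i \geq 2$ and every demand point $v$ whose distance $d_v$ to its $\OPT$ center lies in $[t_{i-1}, t_i)$, we have $t_i \leq (1+\epsilon) t_{i-1} \leq (1+\epsilon) d_v$. Summing over the $o_i$ points in bucket $i$,
\begin{equation*}
o_i t_i \;\leq\; (1+\epsilon) \sum_{v:\, d_v \in [t_{i-1},t_i)} d_v .
\end{equation*}
Summing over $i \geq 2$, the right-hand side is $(1+\epsilon)$ times the contribution to $\OPT$ from all points at distance at least $t_1$, which is at most $(1+\epsilon)\OPT$. Adding the base-bucket bound yields $\sum_i o_i t_i \leq \Delta + (1+\epsilon)\OPT$, as claimed.

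There is no real obstacle here; the only subtlety is making sure that the first bucket is handled by the trivial $o_1 \leq n$ bound (rather than by the geometric-ratio trick, which fails because $t_0 = 0$), and that no double-counting across buckets occurs. Since each demand point is assigned to exactly one bucket according to its $\OPT$ distance, the partition is clean and the argument above is tight.
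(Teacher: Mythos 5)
Your proof is correct and follows essentially the same approach as the paper: partition demand points by which threshold interval their $\OPT$-distance falls into, bound the small-distance bucket's contribution trivially by $\Delta$, and use the geometric ratio $t_i = (1+\epsilon)t_{i-1}$ to handle the rest. Your direct bound $o_1 t_1 \le n\cdot\Delta/n = \Delta$ for the base bucket is if anything a little cleaner than the paper's per-point additive bound, and sidesteps a minor typo ($\Delta/n^2$ where $\Delta/n$ is meant) in the paper's version.
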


\begin{proof}
On all $u \in D$ and set of centers $F$, define $d'(u,F)$ as the minimum distance threshold $t_i$ which is larger than $d(u,F)$. If $d(u,\text{OPT}) > \frac{\Delta}{n^2}$ then $d'(u,\text{OPT}) \leq (1+\epsilon) d(u,\text{OPT}) $. If $d(u,\text{OPT}) \leq \frac{\Delta}{n^2}$, then $d'(u,\text{OPT}) = \frac{\Delta}{n} \leq d(u, \text{OPT}) + \frac{\Delta}{n}$. Summing over $d \in D$ yields the bound.
\end{proof}

For each distance threshold $t_i$, the following lemma shows that the algorithm covers almost as many points as $\OPT$. Its proof is left in the appendix. The result follows mostly from lemma \ref{lem:2eps} and the error of the exponential mechanism in algorithm \ref{alg:setcover}.

\begin{lemma}
\label{lem:coverage}
Consider iteration $i$ of the for-loop and let $M_i$ be the maximum coverage of $k$ centers with radius $t_i$ over points in $V'$.  With high probability, at line 5,  $|V_i| \geq (1-\epsilon)M_i - \frac{24k \ln n \ln \left(\frac{e}{\delta_p}\right)}{\epsilon_p}$.
\end{lemma}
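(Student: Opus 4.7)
The plan is to combine the per-iteration utility guarantee of the exponential mechanism (Theorem~\ref{theorem:expLoss}) with a standard greedy maximum-coverage recursion, propagating the privacy noise as an additive error at each step. This is morally the proof of Lemma~\ref{lem:2eps} with noise, but I will work directly with the residual deficit $M_i - C_j$ to extract the explicit additive term rather than invoking Lemma~\ref{lem:2eps} verbatim (which would require restricting the universe to the $M_i$ optimally coverable points, a step the exponential mechanism does not naturally respect).

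First, I apply Theorem~\ref{theorem:expLoss} to each of the $m = 2k\ln(1/\epsilon)$ draws inside Algorithm~\ref{alg:setcover}. The mechanism chooses $S_j$ with weight proportional to $\exp(\epsilon' |S \cap R_j|)$ for $\epsilon' = \epsilon_s/(2\ln(e/\delta_s))$; the score $q(D, S) = |S \cap R_j|$ has $L_1$-sensitivity $1$, and the candidate pool $\mathcal{S}_i$ has size at most $|V| = n$. Setting the theorem's confidence parameter to $t = \Theta(\ln n)$ and union-bounding over the $m$ iterations so that the total failure probability is $1/\mathrm{poly}(n)$, with high probability every iteration satisfies
\[ c_j \;:=\; |S_j \cap R_j| \;\ge\; \max_{S \in \mathcal{S}_i} |S \cap R_j| \;-\; \Lambda, \]
where $\Lambda = O(\ln n / \epsilon') = O\!\left(\ln n \cdot \ln(e/\delta_p)/\epsilon_p\right)$ after plugging in $\epsilon_s = \epsilon_p/2$ and $\delta_s = \delta_p$ from the call site in Algorithm~\ref{alg:kmedian}.

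Next I relate the best remaining coverage to the residual optimum. Let $C_j = \sum_{\ell \le j} c_\ell$ be the total coverage after $j$ iterations, so $|V_i| = C_m$. The $k$ optimal balls witnessing $M_i$ cover $M_i$ points of $R_1 = V'$ and therefore still cover at least $M_i - C_{j-1}$ points of the uncovered set $R_j$; by averaging, some single set in $\mathcal{S}_i$ covers at least $(M_i - C_{j-1})/k$ of them. Combined with the displayed bound, $c_j \ge (M_i - C_{j-1})/k - \Lambda$, which in terms of the deficit $D_j := M_i - C_j$ rearranges to $D_j \le (1 - 1/k)\, D_{j-1} + \Lambda$ (this remains valid when the per-step lower bound on $c_j$ is negative, since then $D_{j-1} \le k\Lambda$ already, so $D_j \le D_{j-1} \le (1 - 1/k) D_{j-1} + \Lambda$). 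Unrolling via $\sum_{\ell \ge 0}(1 - 1/k)^\ell = k$ yields $D_m \le (1 - 1/k)^m M_i + k\Lambda$. Plugging in $m = 2k\ln(1/\epsilon)$ gives $(1 - 1/k)^m \le e^{-m/k} = \epsilon^2 \le \epsilon$, so $|V_i| = C_m \ge (1 - \epsilon) M_i - k\Lambda$, which matches the stated bound after absorbing constants into $\Lambda$.

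The main obstacle is bookkeeping rather than a conceptual difficulty: I must carefully track $\epsilon'$ versus $\epsilon_s$ versus $\epsilon_p$, pick $t$ so that the union bound over $m$ iterations gives failure probability $1/\mathrm{poly}(n)$ while $t$ stays $O(\ln n)$, and handle the clamping edge case in the recurrence described above. It is the precise execution of this substitution chain, together with the geometric-sum collapse from $m\Lambda$ down to $k\Lambda$, that produces the explicit constant $24$ in the statement.
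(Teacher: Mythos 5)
Your proof is correct, and it takes a genuinely different (and arguably cleaner) route than the paper's. The paper's proof keeps the two ingredients separate: it uses the exponential-mechanism tail bound to argue that, as long as the best remaining coverage $L_i$ exceeds $6\ln n/\epsilon'$, every selection is at least ``half-greedy,'' so Lemma~\ref{lem:2eps} applies and yields coverage $(1-\epsilon)M_i$; once $L_i$ drops below that threshold, it charges the loss by noting that the $k$ optimal balls can cover at most $kL_i \le 6k\ln n/\epsilon' = 24k\ln n\ln(e/\delta_p)/\epsilon_p$ of the still-uncovered points, which is exactly where the constant $24$ comes from. You instead fold the noise $\Lambda$ into a single deficit recursion $D_j \le (1-1/k)D_{j-1}+\Lambda$ via the averaging argument over the $k$ optimal balls, and unroll it with the geometric sum; this sidesteps invoking Lemma~\ref{lem:2eps} at all, which is a real advantage since that lemma is stated for instances where $z$ sets cover all of $\mathcal{U}$, and applying it to the maximum-coverage setting (where only $M_i$ points are coverable) requires exactly the restriction-of-universe step you flag as problematic. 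Your recursion also avoids the paper's factor-of-two loss from the half-greedy threshold: with $t=2\ln n$ you get additive error $3k\ln n/\epsilon' = 12k\ln n\ln(e/\delta_p)/\epsilon_p$, which is stronger than the stated bound (your closing remark that the bookkeeping ``produces the explicit constant $24$'' is therefore slightly off --- you actually beat it). Two small points worth making explicit: the averaging step should note that any optimal ball already removed from $\mathcal{S}_j$ by Algorithm~\ref{alg:setcover} contributes nothing to the residual coverage $M_i - C_{j-1}$, so the maximum over the surviving candidate pool is still at least $(M_i-C_{j-1})/k$; and, as you observe, the clamping discussion is not needed because $c_j \ge D_{j-1}/k - \Lambda$ gives the recursion regardless of sign.
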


The next lemma relates the cost of our solution and that of $\OPT$ given that we cover almost as well as $\OPT$ at every distance threshold.

\begin{lemma}
\label{lem:solutionapprox}
$ \sum_{i=1}^r a_i t_i \leq   \frac{1-\epsilon}{1-\epsilon - \epsilon^2}  \sum_{i=1} o_i t_i +   \frac{24\Delta k \ln n \ln \left(\frac{e}{\delta_p}\right)}{\epsilon_p (1-\epsilon -\epsilon^2)}  $
\end{lemma}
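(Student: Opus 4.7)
The plan is to convert the per-iteration coverage guarantee (Lemma~\ref{lem:coverage}) into a cost bound through a linear recurrence on the residual sizes $V'_j := |V'|$ at the start of iteration $j$ (so $V'_1 = |D|$). First, I rewrite the left-hand side by Abel summation: with $A_j := \sum_{i \ge j} a_i$,
\[
\sum_{i=1}^{r} a_i t_i \;=\; \sum_{j=1}^{r} (t_j - t_{j-1})\, A_j.
\]
Any demand point $u$ with $d(u,C) \ge t_{j-1}$ satisfies $d(u,c) \ge t_{j-1} \ge t_{i'}$ for every $c \in C_{i'}$ with $i' < j$, so it was never removed before iteration $j$; hence $A_j \le V'_j$.

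Next, I set up the recurrence. At iteration $j$, the $k$ optimum centers cover all but at most $O_{j+1}$ points of $D$ at radius $t_j$, and therefore at least $V'_j - O_{j+1}$ points of $V'_j$, giving $M_j \ge V'_j - O_{j+1}$. Applying Lemma~\ref{lem:coverage} with a union bound over the $r$ iterations,
\[
V'_{j+1} \;\le\; V'_j - (1-\epsilon)(V'_j - O_{j+1}) + \alpha \;=\; \epsilon\, V'_j + (1-\epsilon)\, O_{j+1} + \alpha,
\]
where $\alpha := 24k\ln n \ln(e/\delta_p)/\epsilon_p$. Unrolling this linear recurrence,
\[
V'_j \;\le\; \epsilon^{j-1} |D| \;+\; (1-\epsilon) \sum_{k=2}^{j} \epsilon^{j-k} O_k \;+\; \frac{\alpha}{1-\epsilon}.
\]

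Finally, I substitute into $\sum_j (t_j - t_{j-1}) V'_j$ and swap the order of summation. The key computation, using $t_j - t_{j-1} = \epsilon t_{j-1}$ for $j \ge 2$ and $t_j = (1+\epsilon)^{j-1}\Delta/n$, is
\[
\sum_{j \ge k}(t_j - t_{j-1})\,\epsilon^{j-k} \;=\; \epsilon\, t_{k-1} \sum_{l \ge 0}(\epsilon(1+\epsilon))^l \;=\; \frac{t_k - t_{k-1}}{1-\epsilon-\epsilon^2},
\]
which converges provided $\epsilon(1+\epsilon) < 1$. Multiplying by $(1-\epsilon)$ and summing over $k$, the $O_k$ piece of the unrolled bound produces the main term $\frac{1-\epsilon}{1-\epsilon-\epsilon^2}\sum_k O_k (t_k - t_{k-1}) = \frac{1-\epsilon}{1-\epsilon-\epsilon^2}\sum_i o_i t_i$ (Abel summation in reverse). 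The $\alpha$ piece contributes at most $\alpha\Delta/(1-\epsilon)$ since $\sum_j (t_j - t_{j-1}) = \Delta$, and the boundary $\epsilon^{j-1}|D|$ piece contributes at most $\Delta(1-\epsilon)/(1-\epsilon-\epsilon^2)$ by the same geometric-series calculation with $|D| \le n$; both are absorbed into the stated additive error since $\alpha \gg 1 \ge 1-\epsilon$. The main obstacle is precisely this last step of summation algebra: one must carefully account for the boundary iteration $j = 1$ and the residual $|D|$ term that survives the unrolling, and align the geometric series $\sum_l (\epsilon(1+\epsilon))^l$ with $t_j - t_{j-1} = \epsilon t_{j-1}$ so that the coefficient of $\sum_i o_i t_i$ is exactly $\frac{1-\epsilon}{1-\epsilon-\epsilon^2}$ rather than a weaker constant.
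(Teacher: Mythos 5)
Your proof is correct, and it reaches the stated bound by essentially the same ingredients as the paper (the per-threshold guarantee of Lemma~\ref{lem:coverage} applied against the optimum's residual coverage, Abel summation, and the geometric relation $t_{j}-t_{j-1}=\epsilon t_{j-1}$ producing the $\frac{1-\epsilon}{1-\epsilon-\epsilon^2}$ factor), but the bookkeeping is genuinely different. The paper never unrolls a recurrence: it writes the coverage guarantee directly in terms of the suffix sums as $a_i \ge (1-\epsilon)(A_i-O_{i+1})-E$, substitutes once into $\sum_i A_i(t_i-t_{i-1})$, and moves the stray $\sum_i \frac{\epsilon^2}{1-\epsilon}a_it_i$ term back to the left-hand side, with the boundary handled for free since $A_1=O_1=|D|$. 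You instead run the recurrence on the true residual sizes $V'_j$, unroll it, and swap the order of summation. Two observations on the trade-off. First, your step $A_j\le V'_j$ is actually a more careful rendering of what the paper asserts implicitly: the paper treats the number of points covered before iteration $i$ as exactly $n-A_i$, whereas in general one only has $V'_i\ge A_i$ (a point not covered at an early threshold may still end up close to a later center), and your inequality points in the direction that makes the argument sound. Second, unrolling generates the boundary term $\sum_j(t_j-t_{j-1})\epsilon^{j-1}|D|$ and leaves the noise contribution at $\frac{\alpha\Delta}{1-\epsilon}$; absorbing both into the stated additive error the way you do requires $\alpha\epsilon^2\ge(1-\epsilon)^2$, which holds in the intended regime (constant $\epsilon$ and $\alpha=24k\ln n\ln(e/\delta_p)/\epsilon_p$) but is an implicit condition the paper's one-step rearrangement never needs. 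This is a cosmetic weakening rather than a gap, and it disappears entirely if you note that $|D|=O_1$ and fold the boundary term into the $O_k$ sum: since $\sum_{j\ge1}(t_j-t_{j-1})\epsilon^{j-1}=\frac{1-\epsilon}{1-\epsilon-\epsilon^2}\,t_1$, the boundary piece is at most $\frac{1-\epsilon}{1-\epsilon-\epsilon^2}O_1t_1$, which combines with your $k\ge2$ terms to give exactly $\frac{1-\epsilon}{1-\epsilon-\epsilon^2}\sum_i o_it_i$, and the remaining noise term $\frac{\alpha\Delta}{1-\epsilon}\le\frac{\alpha\Delta}{1-\epsilon-\epsilon^2}$ matches the lemma unconditionally.
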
 

\begin{proof}
Let $O_i = \sum_{j=i}^r o_j, A_i = \sum_{j = i}^r a_j$ and $ E = \frac{24k \ln n \ln \left(\frac{e}{\delta_p}\right)}{\epsilon_p}$. Given a threshold $t_i$, we know that the centers in OPT cover $n-O_{i+1}$ points with distance at most $t_i$. At threshold $t_i$, algorithm \ref{alg:kmedian} has already covered $n-A_{i}$ points so we know that there is a solution covering at  additional $(n-O_{i+1}) - (n-A_{i}) = A_{i}-O_{i+1}$ points. By the guarantee of the greedy set cover algorithm in lemma \ref{lem:coverage}, we  cover $a_i \geq (1-\epsilon) (A_{i} - O_{i+1}) - E$ new points on the next iteration. By substituting $A_i = a_i + A_{i+1}$, we have $a_i \geq \frac{(1-\epsilon)}{\epsilon} (A_{i+1}-O_{i+1}) - \frac{E}{\epsilon}$. Notice that: 
\begin{align*}\sum_{i=1}^ra_i t_i= \sum_{i=1}^r A_i (t_i - t_{i-1}) \leq  \sum_{i=1}^r \left(\frac{\epsilon a_{i-1}}{1-\epsilon} + O_i + \frac{E}{1-\epsilon}\right) (t_i - t_{i-1}) &\\=  \sum_{i=1}^r \frac{\epsilon a_{i-1}}{1-\epsilon}  (t_i - t_{i-1}) + \sum_{i=1}^r o_i t_i + \frac{\Delta E}{1-\epsilon}
\end{align*}
The equality is because of the telescoping sums. Also notice that:
\begin{equation*} \sum_{i=1}^r \frac{\epsilon a_{i-1}}{1-\epsilon}  (t_i - t_{i-1}) = \sum_{i=1}^{r-1} \frac{\epsilon a_{i}}{1-\epsilon}  (t_{i+1} - t_{i}) \leq \sum_{i=1}^{r-1} \frac{\epsilon^2 }{1-\epsilon} a_{i} t_{i}  \end{equation*}

The last inequality is because for all $1 \le i \le r-1$, $t_{i+1} = (1+\epsilon)t_i$ by definition. We are also able to drop the term $i = 0$ from the last two sums because $a_0 = 0$.

Thus:
\begin{align*}
   & \sum_{i=1}^r a_i t_i - \sum_{i=1}^{r-1} \frac{\epsilon^2 }{1-\epsilon} a_{i} t_{i} = \frac{1-\epsilon - \epsilon^2}{1-\epsilon}   \sum_{i=1}^{r-1} a_i t_i + a_rt_r  \leq  \sum_{i=1} o_i t_i + \frac{\Delta E}{1-\epsilon} \\ & \implies  \sum_{i=1}^r a_i t_i \leq   \frac{1-\epsilon}{1-\epsilon - \epsilon^2}  \sum_{i=1} o_i t_i + (1-\epsilon -\epsilon^2) \Delta E
\end{align*}
\end{proof}

Combining the results of lemmas \ref{lem:approx} and \ref{lem:solutionapprox}, we see that 
\[\sum_{i=1}^r a_it_i \le \frac{1-\epsilon}{1-\epsilon - \epsilon^2}\left((1+\epsilon)\OPT + \Delta\right) + (1-\epsilon - \epsilon^2)\Delta E\]
which gives us a bound on the cost of snapping points in $D$ to points in $C$.

\begin{lemma}
\label{lem:prelimcost}
Consider the k-medians problem in the last line of algorithm \ref{alg:kmedian}, where demand points in $D$ are shifted to points in $C$ and Laplace noise is applied. With high probability, the optimal objective cost of this new k-medians problem is at most
 \begin{equation*}
     \OPT + \sum a_i t_i + \frac{4\Delta k\ln(1/\epsilon)}{\epsilon_p}\left(\frac{\ln(n)}{\ln(1+\epsilon)} + 2\right) 
 \end{equation*}
where $\OPT$ is the cost of the original k-medians problem.
\end{lemma}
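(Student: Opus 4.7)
The plan is to upper bound the optimal cost of the new $k$-medians problem by exhibiting a feasible solution, namely the original optimal centers $\OPT$. With this choice the new cost equals
\[\sum_{c \in C}(n_c + Y_c)\,d(c,\OPT) = \underbrace{\sum_{c \in C} n_c\, d(c,\OPT)}_{\text{true counts}} + \underbrace{\sum_{c \in C} Y_c\, d(c,\OPT)}_{\text{noise}},\]
and I would bound each piece separately.

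For the true-count piece, let $c(v)$ denote the point in $C$ to which $v \in D$ is snapped (so $n_c = |\{v : c(v) = c\}|$). Then $\sum_c n_c\,d(c,\OPT) = \sum_{v \in D} d(c(v),\OPT)$, and the triangle inequality yields $d(c(v),\OPT) \le d(v,c(v)) + d(v,\OPT)$. Summing over $v$, the second term contributes exactly $\OPT$ while the first contributes $\sum_v d(v,c(v))$. Grouping points in $D$ by the bucket $[t_{i-1}, t_i)$ they fall into (whose sizes are the $a_i$'s), each such $v$ has $d(v,c(v)) < t_i$, so this first piece is at most $\sum_i a_i t_i$. This yields the $\OPT + \sum_i a_i t_i$ contribution to the bound.

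For the noise piece, $\bigl|\sum_c Y_c\,d(c,\OPT)\bigr| \le \Delta \sum_c |Y_c|$ since $d(c,\OPT) \le \Delta$. The set $C$ is produced by $r = \lceil 1 + \log_{1+\epsilon} n\rceil$ iterations of the outer for-loop, each adding at most $m = 2k\ln(1/\epsilon)$ centers, so
\[|C| \le 2k\ln(1/\epsilon)\left(\tfrac{\ln n}{\ln(1+\epsilon)} + 2\right).\]
Each $|Y_c|$ is exponential with mean $2/\epsilon_p$, so $\E[\Delta \sum_c |Y_c|] = 2\Delta|C|/\epsilon_p$, which matches the claimed additive term exactly. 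To convert this to a high-probability statement I would invoke a Chernoff/Bernstein tail bound for the sum of i.i.d.\ exponentials (equivalently the Gamma tail) to show $\sum_c |Y_c|$ is within a constant factor of its mean with high probability, absorbing any resulting constant into the bound.

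The triangle-inequality step and the bucket counting for the true-count piece are essentially direct and should only need care in indexing. The main obstacle is controlling the noise term tightly enough: a naive union bound over the $|C|$ Laplace tails introduces an extra $\log(|C|/\beta)$ factor which is strictly worse than the stated additive error, so the argument has to use a concentration inequality for the sum (rather than per-coordinate bounds) to match the form $\tfrac{4\Delta k \ln(1/\epsilon)}{\epsilon_p}\bigl(\tfrac{\ln n}{\ln(1+\epsilon)} + 2\bigr)$.
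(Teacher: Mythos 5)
Your proof matches the paper's argument exactly: exhibit $\OPT$ as a feasible solution, split the objective into a true-count term (bounded by $\OPT + \sum_i a_i t_i$ via the triangle inequality and the bucket counts $a_i$) and a noise term (bounded by $\Delta \sum_c |Y_c|$ and a tail bound for a sum of i.i.d.\ exponentials, which is precisely the paper's Theorem~\ref{theorem:concentration}). Your own observation that $\E\bigl[\Delta\sum_c |Y_c|\bigr]$ already equals the stated additive term is flagging a real but minor constant-factor slippage that is also present in the paper's proof: Theorem~\ref{theorem:concentration} only guarantees the sum stays below \emph{twice} its mean with high probability, so a careful write-up yields a leading constant of $8$ rather than $4$, which does not affect how the lemma is used asymptotically.
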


We leave the details of this proof in the appendix. Briefly, we can choose OPT of the original problem as a solution to the new problem. In this case, the original points get the cost of OPT plus the cost of snapping to the centers in $C$, which is $\sum a_i t_i$. The last term is an upper bound on the error from using the Laplace mechanism, with high probability.

Using an approximation algorithm for the k-medians problem with approximation factor $M$ in the last step of algorithm \ref{alg:kmedian}, the entire cost gains a multiplicative factor $M$. Therefore, we can summarize the utility of algorithm \ref{alg:kmedian} into the following lemma and even simpler theorem:

\begin{lemma}
\label{lem:utility}
With high probability, algorithm \ref{alg:kmedian} preserves $(\epsilon_p, \delta_p)$ differential privacy and solves the k-medians problem with cost
\[O\left(M(1+\epsilon)\right)\OPT + O\left(\frac{Mk\Delta}{\epsilon_p}\ln n \left(\ln\left(\frac{e}{\delta_p}\right) + \frac{\ln(1/\epsilon)}{\ln(1+\epsilon)} \right) \right)\]
where the black-box k-medians algorithm used in the last step of algorithm \ref{alg:kmedian} has approximation factor $M$ and $\epsilon$ is a small, positive constant.
\end{lemma}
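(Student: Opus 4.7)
The plan is to combine the privacy claim (immediate from Lemma~\ref{lem:kmedianprivacy}) with a three-way decomposition of the cost of the returned centers on the original instance. Let $F^*$ denote the $k$ centers produced by the black-box algorithm on the noisy snapped instance, and for $v \in D$ let $c(v) \in C$ be its closest potential center. By the triangle inequality,
\[
\mathrm{cost}_{\mathrm{orig}}(F^*) \;=\; \sum_{v \in D} d(v,F^*) \;\le\; \sum_{v \in D} d(v,c(v)) + \sum_{c \in C} n_c\, d(c,F^*) \;\le\; \sum_{i=1}^r a_i t_i + \sum_{c \in C} n_c\, d(c,F^*),
\]
where the last inequality uses that any $v$ covered at iteration $i$ has $d(v,c(v)) \le t_i$. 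This separates the shifting cost from the cost on the snapped instance.

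Next I would transfer the second sum to the noisy instance on which the black-box actually runs. Writing $n_c = n_c' - Y_c$ and bounding $d(c,F^*) \le \Delta$ gives $\sum_c n_c d(c,F^*) \le \sum_c n_c' d(c,F^*) + \Delta \sum_c |Y_c|$. The first term is at most $M \cdot \mathrm{OPT}_{\mathrm{new}}$ by the approximation guarantee of the black-box, and $\mathrm{OPT}_{\mathrm{new}}$ is controlled by Lemma~\ref{lem:prelimcost}. For the noise term, $|C| = O(k \ln(1/\epsilon) \ln n / \ln(1+\epsilon))$ independent draws from $\mathrm{Lap}(2/\epsilon_p)$ satisfy $\sum_c |Y_c| = O(|C|/\epsilon_p)$ with high probability by a standard concentration bound for sums of Laplace random variables. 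Combining these pieces yields
\[
\mathrm{cost}_{\mathrm{orig}}(F^*) \;\le\; (1+M) \sum_i a_i t_i + M\cdot \mathrm{OPT} + O\!\left(\frac{M\Delta k \ln n \ln(1/\epsilon)}{\epsilon_p \ln(1+\epsilon)}\right).
\]

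Finally I would plug in the bound on $\sum a_i t_i$ obtained by chaining Lemmas~\ref{lem:approx} and~\ref{lem:solutionapprox}: for constant $\epsilon$ the prefactor $\frac{1-\epsilon}{1-\epsilon-\epsilon^2}$ is $O(1)$, so
\[
\sum_i a_i t_i \;\le\; (1+O(\epsilon))\,\mathrm{OPT} + O\!\left(\frac{\Delta k \ln n \ln(e/\delta_p)}{\epsilon_p}\right),
\]
where the stray $\Delta$ from Lemma~\ref{lem:approx} is absorbed into the additive error. Substituting into the previous display and collecting terms produces exactly the stated bound, with $\ln(e/\delta_p)$ arising from the discretized-threshold analysis and $\ln(1/\epsilon)/\ln(1+\epsilon)$ arising from $|C|$ together with Lemma~\ref{lem:prelimcost}.

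The main obstacle is bookkeeping rather than a single hard inequality: one must carefully distinguish (i) the multiplicative factor $M$, which attaches only to terms that pass through the black-box, from (ii) the additive errors that appear both inside and outside of $M \cdot \mathrm{OPT}_{\mathrm{new}}$, and (iii) the Laplace noise, which contributes both in Lemma~\ref{lem:prelimcost} and in the snap-back inequality $\sum n_c d(c,F^*) \le \sum n_c' d(c,F^*) + \Delta \sum |Y_c|$. Verifying that these contributions combine to a single clean $\ln(e/\delta_p) + \ln(1/\epsilon)/\ln(1+\epsilon)$ factor without double-counting is the delicate part.
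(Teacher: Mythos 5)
Your proof follows essentially the same route as the paper's: bound the optimal cost of the noisy snapped instance via Lemmas~\ref{lem:approx}, \ref{lem:solutionapprox}, and \ref{lem:prelimcost}, apply the black-box's $M$-approximation guarantee, and translate back to the original instance via the triangle inequality on the shifting cost $\sum_i a_i t_i$. Your snap-back inequality $\sum_c n_c\, d(c,F^*) \le \sum_c n_c'\, d(c,F^*) + \Delta\sum_c |Y_c|$ is a worthwhile refinement the paper elides---the black-box's guarantee is stated for the noisy multiplicities $n_c'$ rather than $n_c$, and your explicit accounting of the gap (whose magnitude is of the same order as the other additive noise terms) makes the argument tighter without changing the final bound.
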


The full proof of this lemma can be found in the appendix. In short, we merge the results of lemmas \ref{lem:approx}, \ref{lem:solutionapprox}, and \ref{lem:prelimcost},
and bound the final cost using the private problem's cost, the snapping cost, and the triangle inequality.

\begin{corollary}
By using a constant approximation non-private algorithm for $k$-medians, there is a $(\epsilon_p, \delta_p)$-differentially private algorithm for the $k$-medians problem that, with high probability, outputs a solution at cost \[O\left(\OPT + \frac{k\Delta}{\epsilon_p} \ln (n)  \left(\ln\left(\frac{1}{\delta_p}\right) + \frac{\ln(1/\epsilon)}{\ln(1+\epsilon)} \right) \right). \]
\end{corollary}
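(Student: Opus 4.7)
The plan is to derive this corollary as an immediate specialization of Lemma~\ref{lem:utility}. Privacy is already handled: Lemma~\ref{lem:kmedianprivacy} guarantees $(\epsilon_p,\delta_p)$-differential privacy of Algorithm~\ref{alg:kmedian} regardless of the black-box $k$-medians subroutine used in the last step, since the subroutine only operates on the already-privatized counts $n_c'$. So the only work left is the utility calculation.

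First, I would instantiate Lemma~\ref{lem:utility} with a constant-factor non-private $k$-medians algorithm (for example, the $(2.675+\epsilon)$-approximation of~\cite{BPRST17}), which gives a constant $M = O(1)$. Second, since $\epsilon$ in Algorithm~\ref{alg:kmedian} is a small positive constant chosen by the user (not the privacy parameter), the factor $M(1+\epsilon) = O(1)$ and the quantity $\ln(1/\epsilon)/\ln(1+\epsilon)$ is an absolute constant as well. Plugging these into the bound in Lemma~\ref{lem:utility} collapses the multiplicative factor into $O(\OPT)$ and leaves the additive term
\[
O\!\left(\frac{k\Delta}{\epsilon_p}\ln n\left(\ln\!\left(\frac{e}{\delta_p}\right) + \frac{\ln(1/\epsilon)}{\ln(1+\epsilon)}\right)\right).
\]

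Third, I would tidy up the $\delta_p$ term by writing $\ln(e/\delta_p) = 1 + \ln(1/\delta_p)$, absorbing the $+1$ into the other $O(1)$ constant next to it, so that the additive error takes exactly the form stated in the corollary. The ``with high probability'' qualifier is inherited directly from Lemma~\ref{lem:utility}, which in turn inherits it from Lemma~\ref{lem:coverage} (high-probability guarantee of the exponential mechanism inside Algorithm~\ref{alg:setcover}) and from the concentration of the Laplace noise added in line~11.

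There is no real obstacle here: the corollary is essentially a bookkeeping exercise once Lemma~\ref{lem:utility} is in hand. The only point worth being careful about is not to confuse the utility parameter $\epsilon$ (which is tuned to a small constant so the arithmetic degrades gracefully) with the privacy parameters $\epsilon_p,\delta_p$ (which must remain explicit in the final bound). Once that distinction is respected, the stated cost follows immediately by substitution.
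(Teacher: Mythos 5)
Your proposal is correct and matches the paper's (implicit) derivation: the corollary is stated immediately after Lemma~\ref{lem:utility} with no separate proof precisely because it follows by substituting $M = O(1)$ and absorbing the $+1$ from $\ln(e/\delta_p) = 1 + \ln(1/\delta_p)$ into the neighboring $\ln(1/\epsilon)/\ln(1+\epsilon)$ term (which is bounded below by a constant for small $\epsilon$). You also correctly note that the privacy guarantee is independent of the black-box subroutine since it operates only on the noisy counts $n_c'$.
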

Note that our algorithm allows for any $k$-medians algorithm to be used at the last step. One can choose a preferred trade-off between runtime and performance to select a suitable algorithm. This is in contrast to the approach in \cite{gupta2010differentially}, where the algorithm builds on the $k$-median algorithm in \cite{arya2004local}. 
\bibliography{kmedian.bib}

\section{Application to Euclidean k-means} 
\label{sec:euclidean}

In the Euclidean $k$-medians problem, instead of having a discrete set of demand points, $V$ is defined to be all of $\mathbb{R}^d$. We wish to select a set of points (centers) $F \subset \mathbb{R}^d$, $|F| = k$ to minimize $\text{cost}(F) = \sum_{v \in D} d(v,F)^2$. In this section, we will apply our result to improve additive error in the approach in \cite{stemmer2018differentially}.

The strategy in \cite{stemmer2018differentially} is to first identify a polynomial set of candidate centers such that it contains a subset of $k$ candidate centers with low $k$-means cost. Then, the algorithm uses a private discrete $k$-means algorithm to select the final $k$-centers with low cost from the set of candidate center. More concretely, the algorithms in \cite{stemmer2018differentially} is guaranteed to output a $(\epsilon_p, \delta_p)$-private set of candidate centers $Y$ of size at most $\epsilon_p n \log(\frac{k}{\beta})$ such that with probability at least $1 - \beta $, there exist a subset of size $k$ center with constant multiplicative error and additive error of $O(T^{\frac{1}{1-a-b}} \cdot w^{\frac{1}{1-a-b}} \cdot k^{\frac{1}{1-a-b}}) \Delta^2$, where $a,b$ are small constant parameters of the Locality Sensitive Hashing algorithm used in \cite{stemmer2018differentially}, $T = \Theta(\log \log n)$ and  $$w = O\left(\frac{ \sqrt{d}}{\epsilon_p} \cdot \log\log n \cdot \log\left(\frac{k}{\beta}\right) \sqrt{\log \frac{\log \log n}{\delta_p}}\right)$$
Note that $a,b$ can be chosen arbitrarily small at the cost of making the multiplicative approximation factor a larger constant. The work of~\cite{stemmer2018differentially} focuses on the regime where $a,b$ are small and $1/(1-a-b) \le 1.01$. The resulting additive error for identifying candidate centers is $\tilde{O}_{\epsilon_p, \delta_p}(k^{1.01}d^{.51})$. 

The performance bottleneck of the Euclidean $k$-means is in the algorithm to select the final $k$ centers from a candidate set. We can apply our algorithm on the potential center returned by $\cite{stemmer2018differentially}$ to improve the algorithm performance. Note that our algorithm can be applied to solve the $k$-means objective by passing the correct distance function. Although the squared root of Euclidean distance is not a metric, we can still apply the same algorithm and analysis to get the bound for $k$-means objective. The only difference is at the last step, instead of running a $k$-medians algorithm on the returned center, we run a $k$-means algorithm to get final centers.

Rather than replicate the entire proof, we will only review the sections which are affected by the change in the distance function. Furthermore, we will extend the proof to all distance functions $d^p$ for any natural number $p$.

Notably, the privacy analysis is independent of the distance function, and is therefore unaffected. In fact, the only steps in the proofs of section \ref{sec:privatekmedians} which involve the distance function are lemma \ref{lem:approx} and lemmas \ref{lem:prelimcost} and \ref{lem:utility}. For the distance function $d^p$, lemma \ref{lem:approx} is amended as follows:

\begin{lemma}
\label{lem:approxk}
$\sum_{i=1} o_i t_i \leq (1+\epsilon)^p \text{OPT} + \Delta$
\end{lemma}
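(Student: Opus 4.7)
My plan is to mirror the proof of Lemma~\ref{lem:approx} almost verbatim, modifying only the multiplicative step so that the rounding factor appears raised to the $p$-th power. I will define $d'(u, F)$ as the smallest threshold $t_i$ upper-bounding $d(u, F)$, just as in the original proof, and split the analysis into the same two cases based on the magnitude of $d(u, \OPT)$.

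In the large-distance regime $d(u, \OPT) > \Delta / n^2$, the geometric spacing of the thresholds (with ratio $1+\epsilon$) still forces $d'(u, \OPT) \le (1+\epsilon)\, d(u, \OPT)$; raising both sides to the $p$-th power and summing over such $u$ then contributes at most $(1+\epsilon)^p \sum_u d(u, \OPT)^p = (1+\epsilon)^p \OPT$ to the $p$-th moment. In the small-distance regime $d(u, \OPT) \le \Delta/n^2$, I will use $d'(u, \OPT) = t_1 = \Delta/n$, so each such point contributes exactly $(\Delta/n)^p$; since there are at most $n$ such points, their total contribution is at most $n \cdot (\Delta/n)^p = \Delta^p / n^{p-1}$, which is absorbed into the $+\Delta$ additive term under the standard normalization $\Delta \le n$ (or after a rescaling of the metric). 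Adding the two regimes yields the claimed bound $(1+\epsilon)^p \OPT + \Delta$.

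The main obstacle I anticipate is not technical but notational: the symbols $o_i$ and $t_i$ are reused from the $k$-medians section, and I need to be explicit that $t_i$ continues to denote a threshold on the underlying metric $d$, while the sum $\sum_i o_i t_i$ appearing in the statement is to be read as the $p$-th moment contribution $\sum_u d'(u, \OPT)^p$ so that the lemma actually approximates the $d^p$-cost. Once this convention is pinned down, no new technical idea is required, and the argument reduces to the simple two-case calculation above, making this the cleanest place in the paper to absorb the change of distance function before the downstream lemmas \ref{lem:prelimcost} and \ref{lem:utility} are likewise amended.
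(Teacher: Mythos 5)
Your proof is correct and takes essentially the same route as the paper's: you mirror Lemma~\ref{lem:approx}, keeping the case split on $d(u,\OPT)$ versus a small constant, and raising the multiplicative rounding bound to the $p$-th power to obtain the $(1+\epsilon)^p$ factor; the paper's one-line proof does exactly this (phrasing the case condition as $(d(u,\OPT))^p > \Delta/n^2$, which is the same move stated in terms of the $p$-th power). Your observation about how to read $\sum_i o_i t_i$ as $\sum_u (d'(u,\OPT))^p$ matches the implicit reinterpretation the paper uses; the only superfluous step is your invocation of a $\Delta \le n$ normalization, which is not needed if, as in Lemma~\ref{lem:kforallL}, the additive term is understood as $\Delta^p$ (the diameter under the $d^p$ distance), since $\Delta^p/n^{p-1} \le \Delta^p$ holds unconditionally.
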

\begin{proof}
In the case where $(d(u,\text{OPT}))^p > \frac{\Delta}{n^2}$, now $(d'(u,\text{OPT}))^p \leq (1+\epsilon)^p (d(u,\text{OPT}))^p $. Otherwise, the proof is functionally identical to that of lemma \ref{lem:approx}.
\end{proof}

For lemmas \ref{lem:prelimcost} and \ref{lem:utility}, we directly address and resolve the main issue the distance function faces here, which is that the triangle inequality does not hold when $p > 1$. However, we can use the following lemma, which we prove in the appendix:
\begin{lemma}
\label{lem:amendedtriangleineq}
In any metric space and $p\ge 1$,
$(d(a,b))^p \le 2^{p-1}\left((d(a,c))^p + (d(b,c))^p\right)$.
\end{lemma}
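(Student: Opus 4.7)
The plan is to reduce the claim to a pure inequality about nonnegative reals, then invoke convexity. First I would apply the ordinary triangle inequality in the metric space, which gives $d(a,b) \le d(a,c) + d(b,c)$; since $p \ge 1$ and both sides are nonnegative, raising to the $p$th power preserves the inequality, so
\[
(d(a,b))^p \;\le\; \bigl(d(a,c) + d(b,c)\bigr)^p.
\]
This reduces the lemma to showing $(x+y)^p \le 2^{p-1}(x^p + y^p)$ for all $x,y \ge 0$ and $p \ge 1$.

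For the reduced inequality, I would use the convexity of the function $f(t) = t^p$ on $[0,\infty)$, which holds precisely because $p \ge 1$. Applied to the midpoint, convexity gives
\[
\left(\frac{x+y}{2}\right)^p \;\le\; \frac{x^p + y^p}{2},
\]
and multiplying both sides by $2^p$ yields $(x+y)^p \le 2^{p-1}(x^p + y^p)$. Chaining this with the inequality from the first paragraph (with $x = d(a,c)$, $y = d(b,c)$) gives exactly the claim.

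The proof is essentially routine and there is no real obstacle; the only thing to be slightly careful about is ensuring the argument works at the boundary $p = 1$, where the bound degenerates to the ordinary triangle inequality (and $2^{p-1} = 1$), which it does. An alternative, equally short route would be to note that the power mean inequality gives $\frac{x^p+y^p}{2} \ge \left(\frac{x+y}{2}\right)^p$ directly, but the convexity argument is the most transparent and requires no additional machinery beyond what is standard.
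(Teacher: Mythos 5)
Your proof is correct and takes essentially the same route as the paper's: apply the triangle inequality, then invoke convexity of $t \mapsto t^p$ at the midpoint to obtain $(x+y)^p \le 2^{p-1}(x^p + y^p)$. The only cosmetic difference is that you cleanly isolate the scalar inequality before substituting $x = d(a,c)$, $y = d(b,c)$, whereas the paper states the convexity step directly in terms of the distances; the underlying argument is identical.
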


Using this lemma, we see that when the triangle inequality would be applied, we gain an additional $2^{p-1}$ constant. This affects the leading constant of the approximation factor and additive error, but does not affect the asymptotic cost.

When $p = 2$, this is the distance function used in the $k$-means problem. Therefore, the only changes to $\ref{alg:kmedian}$ necessary to make it a functional $k$-means algorithm are to use the proper input metric $d$ and to run a black-box $k$-means algorithm as the last step rather than a black-box $k$-medians algorithm. Then, if we are trying to minimize the objective function using the distance function $d^p$ and we use a black-box algorithm for this objective function at the last step of algorithm $\ref{alg:kmedian}$, the proofs in section \ref{sec:privatekmedians} using lemma \ref{lem:approxk} instead of \ref{lem:approx} yields the following lemma and corollary:

\begin{lemma}
\label{lem:kforallL}
Given a problem equivalent to $k$-means but with distance function $d^p$, and a discrete algorithm for that problem with approximation factor $M$, there exists an $(\epsilon_p, \delta_p)$ differentially-private algorithm for that problem which, with high probability, has objective cost at most
\[O\left(2^{p-1}M(1+\epsilon)^p\right)\OPT + O\left(\frac{Mk\Delta^p}{\epsilon_p}  \left(\ln n  +  2^{p-1}\ln|Y|\ln\left(\frac{e}{\delta_p}\right)\right)\right)\]

\end{lemma}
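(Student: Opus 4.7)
The plan is to instantiate Algorithm~\ref{alg:kmedian} on the candidate-center set $Y$ produced by~\cite{stemmer2018differentially} with distance $d^p$ and with the promised black-box $d^p$-solver of factor $M$ as the last step, and then to retrace the utility chain (Lemmas~\ref{lem:approx}, \ref{lem:coverage}, \ref{lem:solutionapprox}, \ref{lem:prelimcost}, \ref{lem:utility}) with the two substitutions forced by the new distance: Lemma~\ref{lem:approx} is replaced by Lemma~\ref{lem:approxk}, and every invocation of the triangle inequality is replaced by Lemma~\ref{lem:amendedtriangleineq}.

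Privacy (Lemmas~\ref{lem:kmediansloopprivacy}, \ref{lem:kmedianprivacy}) is agnostic to the choice of $d$ and so transfers unchanged. With thresholds placed on the $d^p$ scale, Lemma~\ref{lem:approxk} immediately yields $\sum_i o_i t_i \le (1+\epsilon)^p \OPT + O(\Delta^p)$. Lemma~\ref{lem:coverage} is a statement purely about the set-cover subroutine: it ports verbatim, except that its $\ln n$ factor becomes $\ln|Y|$, since the ground family of balls is drawn from $Y$ rather than from the entire ambient set. Lemma~\ref{lem:solutionapprox} is a telescoping argument that consumes Lemma~\ref{lem:coverage} as a black box, so its conclusion carries across and gives $\sum_i a_i t_i \le \tfrac{1-\epsilon}{1-\epsilon-\epsilon^2}\sum_i o_i t_i + O\!\left(\tfrac{k\Delta^p \ln|Y|\ln(e/\delta_p)}{\epsilon_p}\right)$.

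The main obstacle, and the one place where $p$ enters nontrivially, is the analogue of Lemma~\ref{lem:prelimcost} together with the closing charging step of Lemma~\ref{lem:utility}. Both of these compare $d^p(v,\OPT)$ to $d^p(v,c) + d^p(c,\OPT)$, where $c \in C$ is the candidate to which $v$ has been snapped: once to bound the optimal cost on the snapped and Laplace-perturbed instance against $\OPT$ plus the snapping cost $\sum_i a_i t_i$, and once to translate the black-box solver's output from the snapped instance back to a cost on $D$. Each such comparison now passes through Lemma~\ref{lem:amendedtriangleineq} and picks up a factor $2^{p-1}$; these are the only places where $p$ appears beyond the discretization factor.

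Collecting, the three additive contributions are the discretization rounding $O(\Delta^p)$, the Laplace-noise term over the $|C| = O(k\ln n / \ln(1+\epsilon))$ counts (contributing the $\ln n$ piece with no triangle amplification because it is added directly, not transported across the snap), and the coverage term amplified by the triangle step (contributing $2^{p-1}\ln|Y|\ln(e/\delta_p)$). Multiplying throughout by $M$, by the $(1+\epsilon)^p$ from discretization, and by the final $2^{p-1}$ from the closing triangle step yields the multiplicative factor $O(2^{p-1}M(1+\epsilon)^p)$ in front of $\OPT$ and, after absorbing constants, the additive error $O\!\left(\tfrac{Mk\Delta^p}{\epsilon_p}\bigl(\ln n + 2^{p-1}\ln|Y|\ln(e/\delta_p)\bigr)\right)$ claimed in Lemma~\ref{lem:kforallL}.
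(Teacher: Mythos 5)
Your proof is correct and takes essentially the same route as the paper: swap in Lemma~\ref{lem:approxk} for Lemma~\ref{lem:approx}, replace each triangle-inequality application by Lemma~\ref{lem:amendedtriangleineq}, replace the ambient ground-set size by $|Y|$ in Lemma~\ref{lem:coverage}, and feed a $d^p$ black-box solver into the last step. The only small imprecision is your claim that the Laplace-noise term escapes the $2^{p-1}$ amplification --- it is in fact also carried across the final snap back to $D$ and so picks up the same factor --- but the paper's stated bound carries the identical imprecision, so your bookkeeping matches the intended conclusion.
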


\begin{corollary}
There is a $(\epsilon_p,\delta_p)$-differentially private algorithm for the Euclidean $k$-means problem that with probability at least $1-\beta$ returns a solution with a constant multiplicative factor and an additive error of 
\begin{equation*}
    O\left(\Delta^2\left(T^{\frac{1}{1-a-b}} \cdot w^{\frac{1}{1-a-b}} \cdot k^{\frac{1}{1-a-b}} \right)  + \frac{\Delta^2  k}{\epsilon_p}\left(\ln n + \ln\left(\epsilon_p n \log\left(\frac{k}{\beta}\right)\right)\ln\left(\frac{e}{\delta_p}\right) \right) \right)
\end{equation*}

\end{corollary}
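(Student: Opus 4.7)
The plan is to compose the candidate-center construction of \cite{stemmer2018differentially} with our Algorithm \ref{alg:kmedian} adapted to $p=2$ in a two-stage pipeline, splitting the privacy budget evenly so that basic composition yields overall $(\epsilon_p,\delta_p)$-differential privacy. Each stage is run with parameters $(\epsilon_p/2,\delta_p/2)$, and the failure probabilities in each stage are chosen so that a union bound gives overall success probability at least $1-\beta$.

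Stage one applies the private candidate-centers algorithm of \cite{stemmer2018differentially} to $D$, yielding a set $Y$ with $|Y|\le \tfrac{\epsilon_p}{2} n\log(k/\beta)$ such that, with probability at least $1-\beta/2$, some size-$k$ subset of $Y$ achieves $k$-means cost on $D$ at most $c_1\cdot\OPT + O(\Delta^2(Twk)^{1/(1-a-b)})$ for some constant $c_1$; halving the privacy parameters only changes constants inside $w$ and is absorbed in the $O(\cdot)$. Stage two invokes Algorithm \ref{alg:kmedian} on $V=Y$, demand set $D$, squared Euclidean distance, privacy parameters $(\epsilon_p/2,\delta_p/2)$, and a constant-factor non-private black-box $k$-means algorithm in the final line. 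By Lemma \ref{lem:kforallL} applied with $p=2$ and $M=O(1)$, the returned $k$ centers have cost
\[
O(1)\cdot\OPT_Y \;+\; O\!\left(\frac{k\Delta^2}{\epsilon_p}\left(\ln n + \ln|Y|\,\ln(e/\delta_p)\right)\right),
\]
where $\OPT_Y$ is the optimum $k$-means cost achievable using centers drawn from $Y$. The stage-one guarantee bounds $\OPT_Y \le c_1\cdot\OPT + O(\Delta^2(Twk)^{1/(1-a-b)})$; substituting this together with the size bound on $|Y|$ and folding all constants into $O(\cdot)$ produces exactly the additive error stated in the corollary while keeping the multiplicative factor a constant.

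The main obstacle is not any deep argument but a small bookkeeping point: Lemma \ref{lem:kforallL} and the analysis of Section \ref{sec:privatekmedians} are stated assuming $D\subseteq V$, whereas here we want $V=Y$ with $D$ possibly disjoint from $Y$. I would verify that nothing in that analysis actually uses containment---only that distances $d(u,v)$ with $u\in D$ and $v\in Y$ are well-defined, that the balls $B_{t_i}(v)\cap V'$ used in Algorithm \ref{alg:kmedian} make sense with $V'\subseteq D$, and that snapping each demand point to the nearest chosen center is possible---all of which hold verbatim in the Euclidean setting. Privacy then follows from composing the two $(\epsilon_p/2,\delta_p/2)$-private stages using Lemma \ref{lem:kmedianprivacy} for stage two, and utility from the computation above together with the union bound over the two failure events.
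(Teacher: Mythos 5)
Your proposal is correct and follows essentially the same route the paper implicitly takes: run the Stemmer--Kaplan candidate-center construction to get $Y$, then feed $V=Y$ with demand set $D$ and the squared-distance objective into Algorithm~\ref{alg:kmedian}, invoke Lemma~\ref{lem:kforallL} with $p=2$ and $M=O(1)$, and bound $\OPT_Y$ by the stage-one guarantee while substituting $|Y|\le\epsilon_p n\log(k/\beta)$. The paper leaves the privacy-budget split and the $D\not\subseteq Y$ bookkeeping implicit; you are right that neither causes a problem (constant factors in $\epsilon_p,\delta_p$ wash into the $O(\cdot)$, and the analysis of Algorithm~\ref{alg:kmedian} never uses $D\subseteq V$, only that balls $B_{t_i}(v)\cap V'$ and nearest-center assignments are well defined), so making these explicit only strengthens the argument.
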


Note that our algorithm results in better additive term compared to applying \cite{gupta2010differentially} on the potential centers. Specifically, the second additive term is almost linear in $k$ instead of $k^{1.5}$, making the entire additive error almost linear in $k$. 

\section{Broader Impact Statement}
Clustering has many applications in machine learning, such as image segmentation \cite{yang2017towards, patel2013latent}, collaborative filtering \cite{mcsherry2009differentially, schafer2007collaborative}, and time series analysis \cite{mueen2012clustering}. Privacy is a major concern when input data contains sensitive information.  Differential privacy \cite{laplacenoise}  has become a rigorous framework for ensuring privacy in algorithms. Thus, differentially private algorithms for clustering problem would ensure for each individual in the input a robust privacy guarantee.

Our improved utility guarantee will perhaps encourage adoption of privacy-preserving algorithm as a replacement of the non-private counterpart. Furthermore, our approach allows for usage with other clustering algorithm as a black-blox. We believe this further improves the applicability of private clustering algorithms, making it easier to incorporate privacy guarantee into existing clustering frameworks. The limitations of the work are that the privacy guarantee requires certain assumptions on the input data such as the data being bounded and the utility guarantee has additive error that is only meaningful when the dataset has a large enough number of participants. When applying the algorithm, the curator has to ensure that the assumptions hold to protect the privacy of the participants.

\bibliographystyle{acm}

\newpage
\appendix
\section{Missing proofs}
\begin{theorem}
\label{theorem:concentration}
If $Z_1, Z_2,...Z_n$ are i.i.d random variables that follow exponential distrubtion with parameter $\lambda > 0$, then $\Pr[\sum_{i=1}^nZ_i \geq 2n\lambda] \leq \left(\frac{2}{e}\right)^n$
\end{theorem}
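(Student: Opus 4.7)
The plan is to prove this via a standard Chernoff-style argument using the moment generating function of the exponential distribution together with Markov's inequality. Since the bound has the form $(2/e)^n$, which is exactly what one obtains from the MGF approach at the canonical optimal tilt, this strongly suggests that the parameter $\lambda$ here denotes the mean (so that $\mathrm{E}[Z_i]=\lambda$) and we are bounding the probability that the sum exceeds twice its expectation.

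First I would apply Markov's inequality to the exponentiated sum: for any $s>0$ such that the MGF exists,
\[
\Pr\Bigl[\sum_{i=1}^n Z_i \ge 2n\lambda\Bigr] = \Pr\Bigl[e^{s\sum_{i=1}^n Z_i} \ge e^{2sn\lambda}\Bigr] \le e^{-2sn\lambda}\,\mathrm{E}\Bigl[e^{s\sum_{i=1}^n Z_i}\Bigr].
\]
By independence of the $Z_i$, the joint MGF factors as $\prod_i \mathrm{E}[e^{sZ_i}]$. A direct integration gives $\mathrm{E}[e^{sZ_i}] = 1/(1-s\lambda)$ for $0<s<1/\lambda$, so the upper bound becomes $e^{-2sn\lambda}(1-s\lambda)^{-n}$.

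The second step is to optimize over $s$. Differentiating the log of the bound or just inspecting the symmetry, the natural choice is $s = 1/(2\lambda)$, which lies safely in the admissible range $(0,1/\lambda)$. Substituting yields
\[
e^{-2 \cdot (1/(2\lambda)) \cdot n\lambda} \cdot \bigl(1-\tfrac{1}{2}\bigr)^{-n} = e^{-n} \cdot 2^n = \Bigl(\tfrac{2}{e}\Bigr)^{n},
\]
which matches the claim. One can either verify optimality of $s=1/(2\lambda)$ by noting that this choice indeed minimizes $f(s)=-2sn\lambda - n\ln(1-s\lambda)$ (setting $f'(s)=0$ gives $s\lambda = 1/2$), or simply present it as a valid choice that suffices to obtain the desired bound.

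There is no substantial obstacle here; the only delicate point is making the parameterization of the exponential distribution explicit (mean $\lambda$ versus rate $1/\lambda$) so that the bound $2n\lambda$ correctly represents twice the expected value of the sum, and ensuring the chosen tilt $s$ stays within the domain of convergence of the MGF.
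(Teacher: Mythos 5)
Your proof is correct and uses the same Chernoff-style argument (Markov on the exponentiated sum, factor the MGF, optimize the tilt at $s\lambda=1/2$) as the paper's own proof. The one difference is parameterization: you read $\lambda$ as the mean ($\E[Z_i]=\lambda$, MGF $1/(1-s\lambda)$), which is the reading that makes the theorem statement as written literally correct, whereas the paper's proof uses $\lambda$ as the rate (MGF $\lambda/(\lambda-t)$, $\E[Z_i]=1/\lambda$) and in fact bounds $\Pr[\sum_i Z_i \geq 2n/\lambda]$, not $\Pr[\sum_i Z_i \geq 2n\lambda]$. So the theorem statement and the paper's proof are not consistent with each other; your version is the self-consistent one, and the two are equivalent up to the substitution $\lambda \leftrightarrow 1/\lambda$.
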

\begin{proof}
Recall that the moment generating function of $Z_i$ is
    $\E[\exp(tZ_i)] =\frac{\lambda}{\lambda-t}$ for $t < \lambda $ and the fact that $\E\left[\exp\left(t \sum_{i=1}^nZ_i \right)\right] = \left(\frac{\lambda}{\lambda-t}\right)^n $ based on the property of moment generating function. We have:
\begin{align*}
    \Pr\left[\sum_{i=1}^nZ_i \geq 2n/\lambda\right] = \Pr\left[\exp\left(t \sum_{i=1}^nZ_i \right) \geq \exp\left(2tn/\lambda\right)\right] &\leq \E\left[\exp\left(t \sum_{i=1}^nZ_i \right)\right] / \exp(2tn/\lambda) \\ &=  \left(\frac{\lambda}{(\lambda-t)\exp(2t/\lambda)}\right)^n  
\end{align*}

The inequality follows from Markov's inequality. Let $t=\lambda / 2$, we have the theorem. 
\end{proof}

\begin{lemma}
\label{lem:2eps_app} \textbf{(Lemma 3.5)}
For $\epsilon > 0$,  if  we always select a set covers at least half of the number of elements in the set that covers the most uncovered elements, then after $2z\ln 1/\epsilon$ iteration, $\Bigl|\bigcup_{c\in\mathcal{C}}c\Bigr| \geq (1- \epsilon) | \mathcal{U}|$.
\end{lemma}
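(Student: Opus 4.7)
The plan is to mirror the classical greedy analysis of maximum coverage, with one minor loss because we select only an approximately-best set at each iteration. The crucial assumption (stated just above the lemma in the text) is that there exists a collection of $z$ sets in $\mathcal{S}$ whose union equals $\mathcal{U}$.

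First, I would set up the standard potential argument. Let $U_i$ denote the set of still-uncovered elements after $i$ iterations, with $U_0 = \mathcal{U}$. By the covering hypothesis, at any stage $i$, the $z$ sets of the fixed optimal cover collectively cover all of $U_i$, so by pigeonhole at least one set in $\mathcal{S}$ covers at least $|U_i|/z$ elements of $U_i$. Since our algorithm is guaranteed to select a set covering at least half as many uncovered elements as the best available, it covers at least $|U_i|/(2z)$ uncovered elements. Consequently,
\[
|U_{i+1}| \le \left(1 - \frac{1}{2z}\right) |U_i|.
\]

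Iterating this inequality $T = 2z\ln(1/\epsilon)$ times and using the standard bound $1-x \le e^{-x}$ yields
\[
|U_T| \le \left(1 - \frac{1}{2z}\right)^{T} |\mathcal{U}| \le \exp\!\left(-\frac{T}{2z}\right) |\mathcal{U}| = \exp(-\ln(1/\epsilon)) |\mathcal{U}| = \epsilon |\mathcal{U}|.
\]
Therefore $|\bigcup_{c \in \mathcal{C}} c| = |\mathcal{U}| - |U_T| \ge (1-\epsilon)|\mathcal{U}|$, which is exactly the stated conclusion.

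There is no real obstacle here; the only subtlety is making sure the ``half as many'' factor propagates correctly through the pigeonhole step (turning the per-iteration shrinkage factor from the classical $(1-1/z)$ into $(1-1/(2z))$), which in turn doubles the number of iterations needed from $z\ln(1/\epsilon)$ to $2z\ln(1/\epsilon)$. Everything else is a one-line application of $1-x \le e^{-x}$.
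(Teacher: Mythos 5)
Your proof is correct and follows essentially the same route as the paper's: pigeonhole against the optimal $z$-set cover gives a set covering at least $|U_i|/z$ uncovered elements, the half-as-good selection rule yields the per-iteration shrinkage $\bigl(1-\frac{1}{2z}\bigr)$, and iterating with $1-x\le e^{-x}$ gives the $(1-\epsilon)|\mathcal{U}|$ coverage after $2z\ln(1/\epsilon)$ steps. Your direct exponentiation is in fact a slightly cleaner write-up of the paper's inductive argument, with no gaps.
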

\begin{proof}
Define $U_j$ to be the number of uncovered elements after $j$ iterations and $A_j$ to be the number of newly covered element at iteration $j$. We will first show that $U_{j+1} \leq (1-\frac{1}{2z})^{j+1}|\mathcal{U}|$. We will prove by induction.  For the base case $j=0$, We know for each iteration there exists some set that can cover at least $1/z$ of the remaining uncovered elements. Otherwise, it's not possible to cover all elements with only just $z$ steps by the optimal solution. Thus, because we always select  at least half of the number of elements in the set that covers the most uncovered elements, $A_j \geq  \frac{U_{j-1}}{2z}$. This proves the base case. Now for the inductive hypothesis, assume that $U_{j} \leq (1-\frac{1}{2z})^{i}|\mathcal{U}|$. Because of the base case we have: 
\begin{align*}
    U_{j+1} \leq U_{j} - U_j /2z  \implies
    U_{j+1}  \leq   (1-\frac{1}{2z})^{i}|\mathcal{U}| (1-\frac{1}{2z}) =  (1-\frac{1}{2z})^{i+1}|\mathcal{U}|
\end{align*}
where the last inequality we used the induction hypothesis. 

Now, note that if there  are $z \cdot i$ iterations : \begin{equation*}U_{z \cdot i} \leq  \left[(1- \frac{1}{2z}) \right]^{z\cdot i} | \mathcal{U}|  \leq \left(\frac{1}{e}\right)^{i/2} |\mathcal{U}| \end{equation*}
For $i \geq 2 \ln\frac{1}{\epsilon}$, observe that the RHS is at most $(1-\epsilon)|\mathcal{U}|$, as desired.
\end{proof}

\begin{lemma} \label{lem:kmediansloopprivacy_app}
\textbf{(Lemma 4.1)}
The for loop in algorithm \ref{alg:kmedian} preserves $(\epsilon_p / 2, \delta_p)$ differential privacy.
\end{lemma}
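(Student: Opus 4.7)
I will analyze the for loop as a single composed mechanism consisting of all exponential mechanism calls made across its iterations, rather than composing the $r$ invocations of algorithm \ref{alg:setcover} as black boxes. Over the $r = O(\log n)$ outer iterations, algorithm \ref{alg:setcover} is invoked for $m = 2k\ln(1/\epsilon)$ inner iterations each, producing a total of $T = rm$ applications of the exponential mechanism, where the score function is $|S \cap R_i|$. Since adding or removing a single demand point changes this score by at most $1$, each score has $L_1$ sensitivity $1$, so by Theorem \ref{theorem:expLoss} each individual selection is $2\epsilon'$-differentially private with $\epsilon' = \epsilon_s/(2\ln(e/\delta_s))$.

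\textbf{Step 1: track a single differing point.} Fix neighboring inputs $D$ and $D' = D \triangle \{x\}$. I will write the joint probability of an execution transcript (the sequence of centers chosen together with their thresholds) as a product of exponential-mechanism probabilities and bound the ratio of these joint probabilities on $D$ versus $D'$ iteration by iteration. The only way $x$ can influence a selection is through its membership in the current uncovered set and in the candidate ball; once $x$ has been covered, its presence has no further effect on the distribution of subsequent selections. This is the adaptation of the Gupta et al.\ set-cover privacy argument to the maximum coverage variant.

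\textbf{Step 2: handle re-selected centers.} In our setting, the same center $v \in V$ may be picked at distinct thresholds $t_i$, which is disallowed in the Gupta et al.\ set-cover framework. I will argue that this does not affect the privacy accounting: the transcript ratio bound depends solely on which demand points were covered at each stage (these determine the score $|S \cap R_i|$ and the normalizing sum), not on the identity of the chosen centers. Therefore any duplicate center choice contributes an identical factor to the transcript probabilities on $D$ and on $D'$, and can be dropped from the accounting. This is what allows us to analyze the whole loop as if it were one call to algorithm \ref{alg:setcover} rather than $r$ calls, saving the $\log n$ factor highlighted in the paper.

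\textbf{Step 3: composition and the parameter choice.} After Step~2, the effective number of privacy-bearing exponential mechanism steps is governed by $m$ (not $rm$). Applying advanced composition to these steps, each of which is $2\epsilon'$-DP, together with the choice $\epsilon' = \epsilon_s/(2\ln(e/\delta_s))$ and $(\epsilon_s, \delta_s) = (\epsilon_p/2, \delta_p)$, yields overall privacy loss at most $\epsilon_p/2$ with failure probability at most $\delta_p$, concluding that the loop is $(\epsilon_p/2, \delta_p)$-differentially private.

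\textbf{Main obstacle.} The delicate part is Step~2: I need to formalize the statement that, in the ratio of transcript probabilities on $D$ versus $D'$, iterations at which an already-selected center is re-chosen cancel exactly. This requires writing the exponential mechanism probabilities explicitly, observing that the relevant scores $|S \cap R_i|$ and partition functions depend only on the current residual $R_i$, and coupling the two executions so that the residuals differ by at most the single point $x$ until $x$ is covered and are identical afterwards. Once this coupling is in place, the composition step is routine, and the remainder of the argument mirrors the corresponding proof in Gupta et al.
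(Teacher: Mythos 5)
Your Steps 1 and 2 point in the right direction: the paper indeed proceeds by writing the transcript probability as a product of exponential-mechanism factors and observing that the ratio depends only on which demand points are covered, not on which centers are chosen. But Step 2 is stated incorrectly. A re-chosen center does \emph{not} automatically contribute an identical factor to the two transcripts; the score $|B_{t_i}(v)\cap R_i|$ and the normalizing sum at a given step depend on the current residual set, and these can still differ between $A$ and $B$ while the differing element $I$ remains uncovered, even if $v$ has been chosen before at a smaller threshold. What actually causes the factors to cancel is that $I$ gets covered, and the crucial observation (and the reason re-selection is harmless for privacy) is that $I$ can only be covered \emph{once across the entire for loop}, so the ratio deviates from $1$ only over the prefix of steps preceding the first time a ball containing $I$ is selected.

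The genuine gap is Step 3. Advanced composition over $m=2k\ln(1/\epsilon)$ applications of a $2\epsilon'$-DP mechanism yields a privacy loss of order $\sqrt{m\ln(1/\delta)}\,\epsilon'$, which is $\Theta(\sqrt{k})\,\epsilon'$ and cannot be absorbed into $\epsilon_p/2$ under the choice $\epsilon'=\epsilon_p/(4\ln(e/\delta_p))$ unless $\delta_p$ is exponentially small in $k$. The paper's proof does not use composition at all. It bounds the log-ratio of transcript probabilities by $(\exp(\epsilon')-1)\sum_{i<t} p_i(A)$, where $p_i(A)$ is the conditional probability that a ball containing $I$ is selected at step $i$ and $t$ is the step at which $I$ is first covered. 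It then invokes the $q$-bad/$q$-good dichotomy (Lemma 6.4 of Gupta et al.): with probability at least $1-\delta_p$ over the output, $\sum_{i<t}p_i(A)\le\ln(1/\delta_p)$, so on the good event the ratio is at most $\exp(2\epsilon'(\ln(1/\delta_p)+1))\le\exp(\epsilon_p/2)$, and the bad event is swept into the additive $\delta_p$. This bound is independent of $m$ and of the number of outer iterations $r$, which is precisely what the lemma needs; advanced composition cannot reproduce it. You would need to replace Step 3 with this martingale-style argument (or cite the Gupta et al.\ lemma directly) to close the proof.
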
 

\begin{proof}
  Let $A$ and $B$ be two sets of demand points from the same point set $V$, such that $A$ and $B$ have a symmetric difference to be a single element $I \in V$. For the sake of this proof, we care about the order in which points from $V$ were selected by the iterations of algorithm \ref{alg:setcover}. Furthermore, the same point $v \in V$ may be selected in multiple iterations of algorithm \ref{alg:setcover}, by being chosen at different thresholds. In order to consider these as separate instances, we will record the order of selected points with the current threshold, as (point, threshold) pairs. We will denote the order of (point, threshold) pairs selected on input $A$ as $\pi_A$, and on input $B$ as $\pi_B$. Note that on any set of demand points, the total number of selections is the same for each iteration of algorithm \ref{alg:setcover}, so when checking equality of two selections $\pi_1$ and $\pi_2$ from the same set of points $V$, the thresholds in the pairs should always match, and equality depends only on the points in the pairs.
  
  First, we want to fix some order $\pi$ and bound the ratio between $\Pr[\pi_A = \pi]$ and $\Pr[\pi_B = \pi]$. Fix $\pi$. We denote by $h$ the total selections in all calls to algorithm \ref{alg:setcover} in the for loop. Note that this value $h$ is independent of $D$ given $V$, and is therefore equivalent for both $A$ and $B$. We will also write $s_{i,j}(D)$ to denote the size of $\mathcal{S}_j$ in algorithm \ref{alg:setcover} (which is the updated value of $V'$ after each iteration of the for loop) after $i-1$ sets have been added to the cover when the demand set is $D$. Then, the probability we make the same choice as $\pi$ on iteration $i$ given that we made the same choices as $\pi$ on the first $i - 1$ iterations is given by \[\frac{\exp{(\epsilon'\cdot s_{i,\pi_i}(D))}}{ \sum_j\exp{(\epsilon'\cdot s_{i,j}(D))}}\]
  where the numerator is the relative probability that algorithm \ref{alg:setcover} picks $\pi_i$, and the denominator is the sum of relative probabilities of all possible choices $j$. Therefore,
  \begin{align*}
    \frac{\Pr[M(A)=\pi]}{\Pr[M(B)=\pi]} 
    &= \prod_{i=1}^h\left(\frac{\exp(\epsilon'\cdot s_{i,\pi_i}(A))/(\sum_j\exp(\epsilon'\cdot s_{i,j}(A)))}{\exp(\epsilon'\cdot s_{i,\pi_i}(B))/(\sum_j\exp(\epsilon'\cdot s_{i,j}(B)))}\right) \\
    &= \frac{\exp(\epsilon'\cdot s_{t,\pi_t}(A))}{\exp(\epsilon'\cdot s_{t,\pi_t}(B))}\cdot \prod_{i=1}^t\left(\frac{\sum_j\exp(\epsilon'\cdot s_{i,j}(B))}{\sum_j\exp(\epsilon'\cdot s_{i,j}(A))}\right)
  \end{align*}
  where $t$ is such the first set in $\pi$ containing $I$ is $S_{\pi_t}$. After $t$ choices, the remaining demand points for $A$ and $B$ are identical since $I$ is covered, and therefore all the following probabilities are equivalent and the terms cancel to multiplicative factor 1. Also, since the items in $\pi$ before $\pi_t$ do not contain $I$, the relative probabilities of choosing those items is the same, so the numerators cancel for those indices.
  
  If $A$ contains $I$ and $B$ does not, then the first term is $\exp(\epsilon')$ since $s_{t,\pi_t}(A) = s_{t,\pi_t}(B) + 1$, and each term in the product is at most 1, since $s_{i,j}(A) \ge s_{i,j}(B)$. Therefore, the whole term is at most $\exp(\epsilon')$. Since $\epsilon' \le \epsilon_s$ for $\delta_p = \delta_s \le 1$, it follows that the for loop is $(\epsilon_p / 2, 0)$ differentially private for this case, which shows the weaker $(\epsilon_p / 2, \delta_p)$ differential privacy.

  Now, suppose $B$ contains $I$ and $A$ does not. In this case, the first term is $\exp(-\epsilon') < 1$. In instance $B$, every (point, threshold) pair which contains $I$ covers exactly 1 more item than that pair in $A$, and all others remain the same size. We denote the set of such pairs as $S^I$.
  Therefore, we have:
  \begin{align*}
    \frac{\Pr[M(A)=\pi]}{\Pr[M(B)=\pi]} & \leq  \prod_{i=1}^t \left(\frac{(\exp(\epsilon')-1)\cdot\sum_{j\in S^I}\exp(\epsilon'\cdot s_{i,j}(A)) + \sum_j\exp(\epsilon'\cdot s_{i,j}(A))}{\sum_j\exp(\epsilon'\cdot s_{i,j}(A))}\right) \\
    &  =  \prod_{i=1}^t \left(1 + (\exp(\epsilon')-1)\cdot p_i(A) \right) \\
    &  \le \prod_{i=1}^t \exp ((\exp (\epsilon') - 1) \cdot p_i(A))
  \end{align*}
  where $p_i(A)$ is the probability that a set containing $I$ is chosen at step $i$ of the algorithm running on instance $A$, conditioned on picking the sets $S_{\pi_1}, \ldots, S_{\pi_{i-1}}$ in the previous
  steps. The last step follows because $1+x \le \exp{(x)}$ when $x \ge 0$.

For an instance $A$ and an element $I\in A$, we say that an order of chosen (point, threshold) pairs $\sigma$ is \textit{$q$-bad} if the sum $\sum_{i} p_i(A) \textbf{1}(I \text{ uncovered at step } i)$ is larger than $q$, where $p_i(A)$ is as defined above. We call $\sigma$ \textit{$q$-good} if it is not $q$-bad. We first consider the case when $\pi$ is $(\ln \delta_p^{-1})$-good. Since the index $t$ corresponds to the first set in $\pi$ containing $I$, we have
 \begin{equation*}
     \sum_{i=1}^{t-1}p_i(A) \leq \ln \delta_p^{-1}.
  \end{equation*}
Continuing the earlier analysis,
  \begin{align*}
    \frac{\Pr[M(A)=\pi]}{\Pr[M(B)=\pi]} &\leq
    \prod_{i=1}^t \exp((\exp(\epsilon')-1)p_i(A))
    \leq \exp(2\epsilon'\sum_{i=1}^tp_i(A)) \\
    &\leq \exp(2\epsilon'(\ln(\frac{1}{\delta_p})+p_t(A)))
    \leq \exp(2\epsilon'(\ln(\frac{1}{\delta_p})+1)) \le \exp{(\epsilon_p / 2)}.
  \end{align*}
Thus, for any $(\ln \delta_p^{-1})$-good output $\pi$, we have $\frac{\Pr[M(A)=\pi]}{\Pr[M(B)=\pi]} \leq \exp(\epsilon_p / 2)$.
We can then invoke the following lemma, which is posed as Lemma 6.4 and proved in appendix B in \cite{gupta2010differentially}.
\begin{lemma} \label{lem:deltafix}
For any instance $A$ and any $I \in A$, the probability that the output $\pi$ is $q$-bad is bounded by $\exp(-q)$.
\end{lemma}

Thus for any set $\mathcal{P}$ of series of choices, we have
\begin{align*}
&\Pr[M(A) \in \mathcal{P}]\\
&= \sum_{\pi \in \mathcal{P}} \Pr[M(A)=\pi]\\
&= \sum_{\pi \in \mathcal{P}: \pi\mbox{ is } (\ln \delta_p^{-1})\mbox{-good}} \Pr[M(A)=\pi] + \sum_{\pi \in \mathcal{P}: \pi\mbox{ is } (\ln \delta_p^{-1})\mbox{-bad} } \Pr[M(A)=\pi] \\
&\leq \sum_{\pi \in \mathcal{P}: \pi\mbox{ is } (\ln \delta_p^{-1})\mbox{-good} } \exp(\epsilon_p / 2) \Pr[M(B)=\pi] + \delta_p\\
&\leq \exp(\epsilon_p / 2)\Pr[M(B) \in \mathcal{P}] +\delta_p.
 \end{align*}

Therefore, we have shown $(\epsilon_p / 2, \delta_p)$ differential privacy in both cases.
\end{proof}

\begin{lemma}
\label{lem:coverage_app}
\textbf{(Lemma 4.4)}
Consider iteration $i$ of the for-loop and let $M_i$ be the maximum coverage of $k$ centers with radius $t_i$ over points in $V'$.  With high probability, at line 5,  $|V_i| \geq (1-\epsilon)M_i - \frac{24k \ln n \ln \left(\frac{e}{\delta_p}\right)}{\epsilon_p}$.
\end{lemma}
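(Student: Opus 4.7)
The plan is to combine the per-iteration utility guarantee of the exponential mechanism (Theorem \ref{theorem:expLoss}) with a Lemma \ref{lem:2eps}-style geometric-decay argument on the uncovered mass. First, I would bound the error at a single iteration of the exponential mechanism inside Algorithm \ref{alg:setcover}. The mechanism is run with privacy parameter $\epsilon' = \epsilon_s/(2\ln(e/\delta_s)) = \epsilon_p/(4\ln(e/\delta_p))$, the score $|S\cap R_j|$ has sensitivity $1$, and the range has size at most $|V|\le n$. By Theorem \ref{theorem:expLoss}, the chosen set covers at least $\max_S|S\cap R_j| - (\ln n + t)/\epsilon'$ points of $R_j$ with probability at least $1-e^{-t}$. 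Picking $t = \Theta(\log n)$ and union-bounding over the $m = 2k\ln(1/\epsilon)$ inner iterations, we may condition on an event of probability $1 - 1/\mathrm{poly}(n)$ on which the per-iteration additive error is uniformly at most $\eta := O(\ln n \cdot \ln(e/\delta_p)/\epsilon_p) \le 8\ln n \ln(e/\delta_p)/\epsilon_p$.

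Next, I would set up the pigeonhole/greedy recurrence. Fix an optimal selection of $k$ centers attaining coverage $M_i$ over $V'$ at radius $t_i$; let $\mathcal{U}^*\subseteq V'$ denote its cover (so $|\mathcal{U}^*| = M_i$), let $C_j$ be the number of points of $V'$ covered by the sets chosen in the first $j$ inner iterations, and set $R_j = V'\setminus(\text{already covered})$. Then $|\mathcal{U}^*\cap R_j| \ge M_i - C_j$, and by averaging over the $k$ optimal balls, one of them intersects $R_j$ in at least $(M_i-C_j)/k$ points. Combined with Step 1 this yields, with high probability at every inner iteration,
\[C_{j+1} - C_j \;\ge\; \frac{M_i-C_j}{k} - \eta, \qquad \text{equivalently,} \qquad D_{j+1} \le \bigl(1-\tfrac{1}{k}\bigr)\,D_j + \eta,\]
where $D_j := \max(M_i - C_j, 0)$ (the boundary case $D_j=0$ already gives $|V_i|\ge M_i$, a stronger bound).

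Finally, I would unroll the recurrence for $m = 2k\ln(1/\epsilon)$ steps to get
\[D_m \;\le\; (1-1/k)^m M_i + \eta\sum_{j=0}^{m-1}(1-1/k)^j \;\le\; \epsilon^2 M_i + k\eta,\]
so $|V_i| = C_m \ge (1-\epsilon^2)M_i - k\eta \ge (1-\epsilon) M_i - k\eta$. Plugging in the bound on $\eta$ and absorbing constants yields the claimed additive term $24k\ln n \ln(e/\delta_p)/\epsilon_p$. I expect the main delicacy to be Step 2: the exponential mechanism picks based on $R_j$-coverage inside $V'$, whereas the pigeonhole lower bound naturally lives on the subuniverse $\mathcal{U}^*$, so one must track $V'$-coverage $C_j$ and $\mathcal{U}^*$-uncovered mass in parallel via the inequality $|\mathcal{U}^*\cap R_j|\ge M_i - C_j$. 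Once that is in place the remainder is a standard geometric-sum calculation essentially identical in spirit to the proof of Lemma \ref{lem:2eps}.
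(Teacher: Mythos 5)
Your proposal is correct and takes essentially the same approach as the paper: both use Theorem~\ref{theorem:expLoss} to bound the per-iteration additive loss of the exponential mechanism with a union bound over the $m$ inner iterations, and both then run a greedy/geometric-decay argument on the uncovered mass. The only stylistic difference is in the bookkeeping: the paper invokes Lemma~\ref{lem:2eps} via a case split on whether the best remaining set size $L_j$ exceeds the noise level $6\ln n/\epsilon'$, whereas you fold the additive noise $\eta$ directly into the recurrence $D_{j+1}\le(1-1/k)D_j+\eta$ and unroll it, which is arguably cleaner and avoids the case analysis entirely (with $\eta\le 3\ln n/\epsilon'=12\ln n\ln(e/\delta_p)/\epsilon_p$, your bound $k\eta$ is in fact a bit tighter than the stated $24k\ln n\ln(e/\delta_p)/\epsilon_p$).
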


\begin{proof}
The items in the family $\mathcal{S}$ in line 4 are exactly one-to-one with the points in $V$, and the set $s_v \in \mathcal{S}$ corresponding to $v\in V$ are exactly the points which are not within distance $t_{i-1}$ of an existing center in $C$ but are within distance $t_i$ of $v$. Therefore, the items covered by the centers in $C_i$ are all within distance $t_i$ of their closest centers, so the change in the coverage of $C$ is at least the size of the set coverage from $C_i$.

Our analysis is similar to \cite{gupta2010differentially}. The main difference is that instead of covering all points that $\OPT$ can cover, we aim to cover a $(1-\epsilon)$ portion within an additive error, hence we run $2k \log(1 / \epsilon)$ iterations instead of $2k \log n$. Consider $|R_i|$ to be the number of remaining elements yet to be covered, and define $L_i =\max_{S \in \mathcal{S}} |S \cap R_i|$, the largest number of uncovered
elements covered by any set in $\mathcal{S}$.

By theorem \ref{theorem:expLoss}, the exponential mechanism when selecting set ensures that with probability at most $1 / n^2$ that we select a center with coverage less than $L_i - \frac{3 \ln n}{\epsilon'} $. When $L_i > \frac{6 \ln n}{\epsilon'} $, we are guaranteed to choose a center that covers at least $L_i / 2$ points. Based on lemma \ref{lem:2eps}, for each iteration, as long as $L_i > \frac{6 \ln n}{\epsilon'}$ we always takes the greedy option and are guaranteed to have  $|C_i| \geq (1-\epsilon)M_i$ with probability at least $1 - \frac{1}{n}$. However, when $L_i \leq  \frac{6 \ln n}{\epsilon'} $ , although we are not guaranteed to take the greedy action, there are at most $\frac{6k \ln n}{\epsilon'}$  yet to be covered by the algorithm compared to $\OPT$ at radius $r$. Thus, the algorithm loses at most  $\frac{6k \ln n}{\epsilon'}$ points.
\end{proof}

\begin{lemma}
\label{lem:prelimcost_app}
\textbf{(Lemma 4.6)}
Consider the k-medians problem in the last line of algorithm \ref{alg:kmedian}, where demand points in $D$ are shifted to points in $C$ and Laplacian noise is applied. With high probability, the optimal objective cost of this new k-medians problem is at most
 \begin{equation*}
     \OPT + \sum a_i t_i + \frac{4\Delta k\ln(1/\epsilon)}{\epsilon_p}\left(\frac{\ln(n)}{\ln(1+\epsilon)} + 2\right)
 \end{equation*}
where $\OPT$ is the cost of the original k-medians problem.
\end{lemma}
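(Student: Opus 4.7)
The plan is to upper bound the optimum of the new instance by exhibiting a specific candidate solution, namely $F^*$, the set of $k$ optimum centers of the original $k$-medians problem on $D$. Since the new instance places demand mass $n_c' = n_c + Y_c$ at each $c \in C$, evaluating the objective at $F^*$ gives
\[\sum_{c\in C} n_c'\, d(c,F^*) = \sum_{c\in C} n_c\, d(c,F^*) + \sum_{c\in C} Y_c\, d(c,F^*),\]
and the task reduces to bounding the two summands separately. I will call these pieces (I) (the true counts) and (II) (the Laplace noise term).

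For (I), I would re-index by the original demand points: each $p\in D$ is snapped to its closest $c_p\in C$, contributing one unit to $n_{c_p}$, so (I) equals $\sum_{p\in D} d(c_p,F^*)$. By the triangle inequality $d(c_p,F^*) \le d(c_p,p) + d(p,F^*)$, so summing over $p\in D$ yields $(I) \le \sum_{p} d(p,c_p) + \sum_{p} d(p,F^*) \le \sum_i a_i t_i + \OPT$. Here the first sum is bounded using the definition of $a_i$ (the number of demand points whose closest $C$-distance lies in $[t_{i-1},t_i)$, so each such $p$ contributes at most $t_i$), and the second sum is exactly $\OPT$ of the original problem.

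For (II), since $d(c,F^*) \le \Delta$, I have $|(\text{II})| \le \Delta \sum_{c\in C} |Y_c|$. Each $|Y_c|$ is exponential with rate $\epsilon_p/2$, and the total number of centers output by the for-loop is at most $|C|\le m \cdot r = 2k\ln(1/\epsilon)\lceil 1 + \log_{1+\epsilon} n \rceil \le 2k\ln(1/\epsilon)\bigl(\ln n / \ln(1+\epsilon)+2\bigr)$, where $m=2k\ln(1/\epsilon)$ is the number of candidates per call to Algorithm~\ref{alg:setcover} and $r=\lceil 1+\log_{1+\epsilon} n\rceil$ is the number of outer iterations. Applying Theorem~\ref{theorem:concentration} with parameter $\epsilon_p/2$ and $|C|$ variables, with probability at least $1-(2/e)^{|C|}$ one has $\sum_c |Y_c| \le 4|C|/\epsilon_p$, which after multiplying by $\Delta$ gives the stated noise term (up to constants).

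The one subtlety is that $Y_c$ may be negative, so the new instance is technically a signed-weight problem; this is harmless for our argument because we only need that the optimum of the new instance is at most its cost on the candidate $F^*$, and we only need an \emph{upper} bound on (II), which the absolute-value estimate provides. The remainder is bookkeeping: correctly orienting the triangle inequality from $c_p$ through $p$ to $F^*$, identifying $\sum_p d(p,c_p)$ with $\sum_i a_i t_i$ via the definition of $a_i$, and counting the maximum number of centers the for-loop can output.
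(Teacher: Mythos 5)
Your proof is correct and follows essentially the same route as the paper's: you exhibit the original optimal solution $F^*$ as a candidate for the new instance, bound its cost via the triangle inequality $d(c_p,F^*)\le d(c_p,p)+d(p,F^*)$ to get $\sum a_it_i+\OPT$, and bound the Laplace contribution by $\Delta\sum_c|Y_c|$ together with Theorem~\ref{theorem:concentration} and the count $|C|\le 2k\ln(1/\epsilon)\bigl(\ln n/\ln(1+\epsilon)+2\bigr)$. Your explicit decomposition of the objective at $F^*$ into a true-count term and a noise term, and your remark that $Y_c$ may be negative (so the instance is a signed-weight problem and only an upper bound is required), are slightly more careful than the paper's exposition but do not change the argument.
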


\begin{proof}
After assigning every point in $D$ to the closest points in $C$, we run a $k$-medians algorithm on a multiset defined by $C$, where each element $c \in C$ has multiplicity  $n'_c$ as in line 11 of algorithm \ref{alg:kmedian}.  Recall that the absolute value of a random variable following Laplace distribution with parameter $b$ follows an exponential distribution with parameter $\frac1b$. Also, by theorem \ref{theorem:concentration}, the sum of exponential variables will be less than twice the expectation with high probability. For the sake of completeness, we include the proof of this fact in theorem \ref{theorem:concentration}. It is also significant that, since we call algorithm \ref{alg:setcover} a total $\lc \log_{(1+\epsilon)}n + 1 \rc = \lc \frac{\ln n}{\ln(1+\epsilon)} + 1 \rc$ times with $m = 2k\ln(1/\epsilon)$, we select at most $|C| \le 2k\ln(n)\ln(1/\epsilon) / \ln(1+\epsilon) + 2k\ln(1/\epsilon)$ centers before calling the black-box k-medians algorithm. With all this preliminary information, we begin to prove the claim.

The last term in the bound is obtained by the Laplace mechanism, where noise is applied to the counts of each center $c\in C$. Each of the  centers in $C$ has Laplacian noise applied to it with parameter $2/\epsilon_p$. Therefore, with high probability at most $\frac{4k\ln(1/e)}{\epsilon_p}\left(\frac{\ln(n)}{\ln(1+\epsilon)} + 2\right)$ demand points are "added" by line 11 of the algorithm, and each of these points is at distance at most $\Delta$ from their closest center, which yields the last term in the bound.

The first two terms come from the fact that we shift points in line 9 of algorithm \ref{alg:kmedian} and the triangle inequality. For each of the original demand points $v \in D$, let the cluster center in OPT closest to $v$ be $OPT_v$, and let the point in $C$ closest to $V$ be denoted $c_v$. We see that
\[d(c_v,OPT_v) \le d(v,c_v) + d(v, OPT_v)\]
by the triangle inequality. Since OPT of the original k-medians problem is a candidate solution for the new k-medians problem, the objective cost of using OPT upper bounds the optimal cost. Summing over all $v\in D$, we see that the objective cost of shifted demand points is therefore bounded as
\[\sum_{v\in D}d(c_v, OPT_v) \le \sum a_it_i + \OPT\]
since $\sum a_it_i$ is an upper-bound approximation of the cost of shifting the demand points to centers in $C$, and the sum of $d(v,OPT_v)$ yields exactly $\OPT$. Thus, the first two terms come from the cost of shifting the real demand points, and the last term comes from the Laplacian noise.

Note that there may be a better solution than the original k-medians' OPT to this new k-medians problem, but this is consistent with the bound by inequality.
\end{proof}

\begin{lemma}
\label{lem:utility_app}
\textbf{(Lemma 4.7)}
With high probability, algorithm \ref{alg:kmedian} preserves $(\epsilon_p, \delta_p)$ differential privacy and solves the k-medians problem with cost
\[O\left(M(1+\epsilon)\right)\OPT + O\left(\frac{Mk\Delta}{\epsilon_p}\ln n \left(\ln\left(\frac{e}{\delta_p}\right) + \frac{\ln(1/\epsilon)}{\ln(1+\epsilon)} \right) \right)\]
where the black-box k-medians algorithm used in the last step of algorithm \ref{alg:kmedian} has approximation factor $M$ and $\epsilon$ is a small, positive constant.
\end{lemma}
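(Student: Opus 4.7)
The privacy portion is immediate from Lemma~\ref{lem:kmedianprivacy}, so the work lies entirely in the utility bound. The plan is to split the true clustering cost into a ``snapping'' piece and a ``shifted-instance'' piece via the triangle inequality, absorb the snapping piece into the bounds already proved, and pay the multiplicative factor $M$ only on the call to the black-box. Concretely, let $F$ be the output of the black-box $k$-medians algorithm and, for each $v \in D$, let $c_v$ denote the point in $C$ closest to $v$. The triangle inequality gives
\[
\text{cost}(F) = \sum_{v \in D} d(v,F) \;\le\; \sum_{v \in D} d(v, c_v) + \sum_{c \in C} n_c\, d(c,F),
\]
where the first sum is at most $\sum_i a_i t_i$ by the definition of the $a_i$'s.

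Next I would replace the true counts $n_c$ in the second sum by the noisy counts $n_c' = n_c + Y_c$ that the black-box actually sees. Writing $\sum_c n_c d(c,F) = \sum_c n_c' d(c,F) - \sum_c Y_c d(c,F)$, the first piece is at most $M \cdot \text{OPT}_{\text{new}}$ by the approximation guarantee, where $\text{OPT}_{\text{new}}$ is the optimum of the noisy $k$-medians instance analyzed in Lemma~\ref{lem:prelimcost}. The second piece is at most $\Delta \sum_c |Y_c|$; since each $|Y_c|$ is exponential with rate $\epsilon_p/2$ and $|C| \le 2k\ln(1/\epsilon)(\ln n/\ln(1+\epsilon)+1)$, Theorem~\ref{theorem:concentration} yields $\sum_c |Y_c| \le 4|C|/\epsilon_p$ with high probability, contributing an additive $O\!\left(\frac{k\Delta \ln n \ln(1/\epsilon)}{\epsilon_p \ln(1+\epsilon)}\right)$. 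Plugging in the bound on $\text{OPT}_{\text{new}}$ from Lemma~\ref{lem:prelimcost} (which is $\OPT + \sum_i a_i t_i$ plus a term of the same order as above) gives
\[
\text{cost}(F) \;\le\; M\cdot \OPT + (M+1)\sum_i a_i t_i + O\!\left(\frac{Mk\Delta \ln n \ln(1/\epsilon)}{\epsilon_p \ln(1+\epsilon)}\right).
\]

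Finally I would chain Lemma~\ref{lem:solutionapprox} and Lemma~\ref{lem:approx}: the former bounds $\sum_i a_i t_i$ by $\frac{1-\epsilon}{1-\epsilon-\epsilon^2}\sum_i o_i t_i + O\!\left(\frac{k\Delta\ln n \ln(e/\delta_p)}{\epsilon_p}\right)$, and the latter bounds $\sum_i o_i t_i$ by $(1+\epsilon)\OPT + \Delta$. Substituting back, the leading constant collapses to $O(M(1+\epsilon))$ for small constant $\epsilon$, and the two additive contributions combine into the advertised $O\!\left(\frac{Mk\Delta \ln n}{\epsilon_p}\bigl(\ln(e/\delta_p) + \ln(1/\epsilon)/\ln(1+\epsilon)\bigr)\right)$. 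The main bookkeeping obstacle is being careful about where the factor $M$ applies — only the cost on the noised, shifted instance is paid through the black-box, so $\sum_i a_i t_i$ appears both outside $M$ (from the triangle-inequality split) and inside $M$ (through $\text{OPT}_{\text{new}}$), and the two copies must be combined as $(M+1)\sum_i a_i t_i$ rather than inflating to anything quadratic in $M$. The ``with high probability'' claim is then a union bound over the Laplace concentration event above and the coverage events of Lemma~\ref{lem:coverage}, each of which fails with probability at most $1/\mathrm{poly}(n)$.
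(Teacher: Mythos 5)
Your proposal is correct and follows the same overall strategy as the paper: reduce the true cost to the snapping cost plus the cost on the noised, shifted instance via the triangle inequality, pay the black-box factor $M$ only on the latter, and then chain Lemmas~\ref{lem:prelimcost}, \ref{lem:solutionapprox}, and \ref{lem:approx}. If anything you are slightly more careful than the paper's own write-up: the paper says ``the new objective cost only adds the shifting cost on top of the modified $k$-medians problem's objective cost,'' which silently slides between $\sum_c n_c\,d(c,F)$ (the true-count shifted cost appearing after the triangle inequality) and $\sum_c n_c'\,d(c,F)$ (the noisy cost the black-box actually controls), whereas you write out the correction term $-\sum_c Y_c\,d(c,F) \le \Delta\sum_c|Y_c|$ explicitly and observe that it is of the same order as the Laplace term already inside $\text{OPT}_{\text{new}}$. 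Your observation that $\sum_i a_i t_i$ enters once outside $M$ and once through $\text{OPT}_{\text{new}}$, combining to a factor $(M+1)$ rather than anything worse, matches the paper's final coefficient $2M+1+(M+1)O(\epsilon)$ on $\OPT$ after plugging in Lemmas~\ref{lem:solutionapprox} and~\ref{lem:approx}.
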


\begin{proof}
Combine the results of lemmas \ref{lem:approx}, \ref{lem:solutionapprox}, and \ref{lem:prelimcost}, we see that with high probability the optimal cost of the k-medians problem at the final line of algorithm \ref{alg:kmedian} is given by at most
\begin{equation*}
    \Big(\frac{2-\epsilon-2\epsilon^2}{1-\epsilon-\epsilon^2}\Big)  \OPT + \Big(  \frac{4\Delta k\ln(1/\epsilon)}{\epsilon_p}\left(\frac{\ln(n)}{\ln(1+\epsilon)} + 2\right) +  \frac{24 k \ln n \ln \left(\frac{e}{\delta_p}\right)}{\epsilon_p(1-\epsilon -\epsilon^2)}  +\frac{1-\epsilon}{1-\epsilon -\epsilon^2} \Big)\Delta.
\end{equation*}
For simplicity, we will use big-O notation going forward, so this is the same as
\[(2 + O(\epsilon))\OPT + O\left(\frac{k\Delta}{\epsilon_p}  \left(\ln (n)\frac{\ln(1/\epsilon)}{\ln(1+\epsilon)}  +  \ln n \ln\left(\frac{e}{\delta_p}\right)\right)\right).\]

Since the k-medians algorithm used at the last line has approximation factor $M$, the objective cost of the shifted k-medians problem is $M$ times that result. To obtain the objective cost for the original problem, we see that for any demand point $v \in D$, the distance between $v$ and a center is at most $d(c_v, v)$ and the distance from $c_v$ to a center, by triangle inequality. Therefore, the new objective cost only adds the shifting cost on top of the modified k-medians problem's objective cost, which only slightly affects the factor in front of \OPT:

\[(2M + 1 + (M+1)O(\epsilon))\OPT + O\left(\frac{Mk\Delta}{\epsilon_p}  \left(\ln (n)\frac{\ln(1/\epsilon)}{\ln(1+\epsilon)}  +  \ln n \ln\left(\frac{e}{\delta_p}\right)\right)\right)\]
which simplifies to the claim.
\end{proof}

\begin{lemma}
\label{lem:amendedtriangleineq_app}
\textbf{(Lemma 5.2)}
In any metric space and $p\ge 1$,
$(d(a,b))^p \le 2^{p-1}\left((d(a,c))^p + (d(b,c))^p\right)$.
\end{lemma}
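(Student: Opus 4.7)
The plan is to reduce the statement to a one-variable convexity inequality after applying the ordinary triangle inequality. Set $x = d(a,c)$ and $y = d(b,c)$. The metric axioms give $d(a,b) \le x + y$, and since $p \ge 1$ the function $t \mapsto t^p$ is nondecreasing on $[0,\infty)$, so $(d(a,b))^p \le (x+y)^p$. It therefore suffices to prove the purely real inequality
\[
(x+y)^p \le 2^{p-1}(x^p + y^p) \qquad \text{for all } x,y \ge 0 \text{ and } p \ge 1.
\]

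For this, I would invoke convexity of $\varphi(t) = t^p$ on $[0,\infty)$, which holds precisely because $p \ge 1$. By Jensen's inequality applied to the two points $x$ and $y$ with weights $\tfrac12,\tfrac12$,
\[
\left(\frac{x+y}{2}\right)^p = \varphi\!\left(\frac{x+y}{2}\right) \le \frac{\varphi(x) + \varphi(y)}{2} = \frac{x^p + y^p}{2}.
\]
Multiplying through by $2^p$ yields exactly $(x+y)^p \le 2^{p-1}(x^p+y^p)$, as required. Combining with $(d(a,b))^p \le (x+y)^p$ proves the lemma.

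There is essentially no obstacle: the only subtlety is that one must have $p \ge 1$ to get convexity (for $p<1$ the inequality reverses), but this is exactly the hypothesis, and the case $p=1$ recovers the ordinary triangle inequality with constant $2^{0}=1$ on the right, which is tight. The constant $2^{p-1}$ is also tight, attained when $x = y$.
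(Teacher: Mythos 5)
Your proof is correct and follows essentially the same route as the paper's: apply the triangle inequality, then use convexity of $t \mapsto t^p$ at the midpoint of $d(a,c)$ and $d(b,c)$ to get the factor $2^{p-1}$. The added remarks on tightness and the necessity of $p \ge 1$ are accurate but not needed.
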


\begin{proof}
Since the function $f(x) = x^p$ is convex over non-negative reals, and distance is a non-negative function,
\[\left(\frac{d(a,c)}{2} + \frac{d(b,c)}{2}\right)^p \le \frac{1}{2}\left(d(a,c)\right)^p + \frac{1}{2}\left(d(b,c)\right)^p. \]
Hence,
\begin{align*}
    \left(d(a,b)\right)^p &\le \left(d(a,c) + d(b,c)\right)^p\\
    &\le \frac{1}{2}\left(2d(a,c)\right)^p + \frac{1}{2}\left(2d(b,c)\right)^p\\
    &\le 2^{p-1}\left(\left(d(a,c)\right)^p + \left(d(a,c)\right)^p\right)
\end{align*}
The first inequality follows from the triangle inequality, and the second follows from the convexity claim above.
\end{proof}

\end{document}